\newcommand\hlbreakable[1]{\textcolor{black}{#1}}
\newcommand{\ubar}[1]{\underaccent{\bar}{#1}}
\journal{Computers \& Security}
\DeclareRobustCommand\EIOPAlongname{}
\newcommand{\var}{\operatorname{Var}}
\begin{document}

\begin{frontmatter}

\title{Duopoly insurers' incentives for data quality under a mandatory cyber data sharing regime}


\author[KTH]{Carlos Barreto}
\ead{cbarreto@kth.se}

\author[UMU,LF]{Olof Reinert}

\author[UMU,LF]{Tobias Wiesinger}

\author[KTH,RISE]{Ulrik Franke}

\address[KTH]{Digital Futures, KTH Royal Institute of Technology, SE-100 44 Stockholm, Sweden}
\address[UMU]{Ume{\aa} University, SE-901 87 Ume{\aa}, Sweden}
\address[LF]{L{\"a}nsf{\"o}rs{\"a}kringar, SE-106 50 Stockholm, Sweden}
\address[RISE]{RISE Research Institutes of Sweden, P.O. Box 1263, SE-164 29 Kista, Sweden}

\begin{abstract}
We study the impact of data sharing policies on cyber insurance markets. These policies have been proposed to address the scarcity of data about cyber threats, which is essential to manage cyber risks. We propose a Cournot duopoly competition model in which two insurers choose the number of policies they offer (i.e., their production level) and also the resources they invest to ensure the quality of data regarding the cost of claims (i.e., the data quality of their production cost). We find that enacting mandatory data sharing sometimes creates situations in which at most one of the two insurers invests in data quality, whereas both insurers would invest when information sharing is not mandatory. This raises concerns about the merits of making data sharing mandatory.
\end{abstract}

\begin{keyword}
Cyber risk \sep
Data sharing \sep
Data quality \sep
Cyber insurance \sep
Cournot model
\end{keyword}

\end{frontmatter}


\section{Introduction \label{intro}}

With an ever-increasing societal dependence upon information technology and digital services, \emph{cyber risk} has received much attention lately. Such risk is not limited to any particular sector, but can be found everywhere; from manufacturing \citep{wells2014cyber,Ani2017cybersecurity} over healthcare \citep{kruse2017cybersecurity,coventry2018cybersecurity} and the power grid \citep{ericsson2010cyber,sridhar2011cyber} to financial services \citep{Kopp2017,dupont2019cyber,varga2021cyber}.

Not only is cyber risk increasingly found everywhere, but the interconnectedness and interdependency of this modern world also poses challenges of its own. As pointed out by \cite{bohme2006models}, there are at least two important forms of \emph{interdependent cyber risk}: First, firms are connected to each other. While allowing huge efficiency gains when exchanging information across complex supply-chains, this also means that diligent security efforts at any one firm always risk being undermined by sloppy security somewhere else. The proverbial chain is never stronger than its weakest link. Second, many firms use the same systems, so a vulnerability found in a popular operating system, web browser, or encryption protocol may immediately put millions if not billions of machines at risk.

These difficulties are particularly relevant for insurers and reinsurers who underwrite cyber risks as part of \emph{cyber insurance} offerings, and it has been repeatedly observed that interdependent cyber risk poses an important challenge to the development of a more mature and well-functioning cyber insurance market \citep[pp.~93--98]{anderson2006economics, oecd2017cyber}. \hlbreakable{This is not the place to review the extensive literature on cyber insurance---a comprehensive but slightly dated literature review is offered by \citet{bohme2010modeling} and a more recent review is given by \citet{marotta2017cyber,barreto2021cyber}. Some notable complications with cyber insurance---in addition to the interdependence of cyber risks noted above---include unclear coverage, immature market offerings, various information asymmetries, and lack of cyber security experience and expertise on the part of insurers. In our context, however, the most important  complication is lack of good actuarial data \citep[see e.g.][pp.~94--95]{biener2015insurability,franke2017cyberinsurance,eiopa2019cyber,oecd2017cyber}.}

\hlbreakable{
To some extent, this lack of data reflects more general problems with cyber risk data, not limited to cyber insurance. A recent attempt to systematize quantitative studies of the consequences of cyber incidents by \citet{woods2021systematization} found several contradictory and sometimes spurious results, and cautions against employing too simple statistical relationships. Similarly, a review of estimates of cyber risk likelihood found contradicting trends and emphasizes the need for rigorous and transparent methods to avoid jumping to erroneous conclusions \citep{woods2022reviewing}. In the insurance context, the difficulty of properly quantifying cyber risk forces expert-based or best-guess rather than actuarial pricing. Clearly, this may lead to undesirable outcomes, such as underpricing, where insurers unknowingly accept too much cyber risk, overpricing, where insureds pay too much for their risk transfer, or blanket exclusions of certain kinds of customers, who thus cannot reap the benefits of insurance \citep[see, e.g.,][]{gordon2003framework,mott2023between}.}
Against this background, it has been proposed that increased sharing of data between insurers might be beneficial.

A recent example is an analysis by the \cite{oecd2020cyberdata} on how to enhance the availability of data for cyber insurance underwriting. The report walks through existing practices such as cyber incident data being published by CERTs or regulators, information exchange (such as the CRO forum), commercial  catastrophe  models made available by firms such as AIR Worldwide and RMS, and reinsurer collections of aggregate data, but ultimately concludes that ``[n]one of these data sources on their own provide sufficient information for underwriting coverage as incident data is seen to be incomplete, historical experience covers too few claims and models are relatively new and untested'' \cite[p.~9]{oecd2020cyberdata}. Instead, three recommendations for government action are made; (i)~to remove legal obstacles to incident and claims data sharing, (ii)~to encourage industry associations to establish mechanisms for incident and claims data sharing, and (iii)~to encourage international collaboration.

Another recent example is a strategy note on cyber underwriting published by the European Insurance and Occupational Pensions Authority \citep{eiopa2020cyber}. Here, lack of data is identified as a primary obstacle to the understanding of cyber risk, and accordingly, to appropriate coverage being offered on the market. It is also noted that the mandatory incident reporting regimes established by  recent legislation such as the GDPR and the NIS directive will create relevant data. Against this background it is argued that access to a cyber incident database ``could be seen as a public  good  and  underpin  the  further  development  of the European cyber insurance industry and act as an enabler of the digital economy'' \citep[p.~3]{eiopa2020cyber}. The strategy delineated consists of EIOPA (i)~promoting a harmonized cyber incident reporting taxonomy with ``an aim to promote the development of a centralised (anonymised) database'' \citep[p.~4]{eiopa2020cyber}, (ii)~engaging with the industry to understand their perspective, and (iii)~encouraging data sharing initiatives.

The industry association \cite{insuranceeurope2020cyber}, in a direct response to the EIOPA strategy, broadly welcomes the strategy's recognition that lack of data is a serious impediment to the growth of the European cyber insurance market. However, Insurance Europe also notes that there are trade-offs involved. Specifically, it is cautioned that while a common cyber incident database should ideally be more detailed than the GDPR and NIS data it should at the same time not impose unnecessary burdens of additional reporting or IT system adaptation, and such a database should not  distort  competition.\footnote{Despite the enthusiasm of Insurance Europe about using GDPR and NIS incident reports to improve cyber insurance offerings, not everyone offering cyber insurance is even aware of this possibility, as shown by a study in Norway where the interviewed ``insurers seem oblivious to this aspect of NIS'' \citep{bahsi2019cyberinsurance}.} Specifically, ``if  an  insurer shares data it must gain access to an equal quantity and quality of data in return'' \cite[p.~2]{insuranceeurope2020cyber}.

The relevance of data quality is underscored by recent empirical research on NIS incident reports. Based on all the mandatory NIS incident reports received by the responsible government agency in Sweden in 2020, \citet{Franke2021NIS} find the economic aspects of reports to be incomplete and sometimes difficult to interpret. Thus, it is concluded that ``just making NIS reporting, as-is, available to insurers would not by itself solve the problem of lack of data for cyber insurance. Making the most of the reporting requires additional quality assurance mechanisms.''

It is this unfolding policy issue that motivates the research question of this article: \emph{What would happen to data quality under a mandatory cyber data sharing regime for insurers?} To answer it, a game-theoretic model is constructed where cyber insurers interact on a Cournot oligopoly market, but are uncertain about their (and their competitors') production costs, i.e., the true costs of the cyber incidents underwritten. When forced to share what information they do have, they cannot refuse, but they can choose whether to invest in improving the data quality of their own information, or just provide it as-is. The model can be seen as an attempt to formalize Insurance Europe's remark about sharing equal quantities and qualities of data---how would such sharing unfold? It should be stressed that both the OECD and EIOPA stop short of recommending mandatory cyber data sharing laws. Nevertheless, the question is implicitly on the table, and our investigation aims to bring one more perspective to this important issue.

The rest of the paper is structured as follows. In the next section, some related work is discussed, and the contribution is positioned with respect to this literature. In Section~\ref{model} the formal model is introduced, and the main results are shown in Section~\ref{equilibria}. We find mandatory data sharing changes the feasible Nash Equilibria (NE) and creates situations in which at most one of the two insurers invests in data quality, whereas both insurers would invest when information is not shared. The results are followed by a discussion of implications and conclusions in Section~\ref{discussion}.

\section{Related work \label{related}}

The general topic of cyber security information sharing is extensively addressed in the literature. A good starting point is the literature survey provided by \cite{skopik2016problem}, who offer a comprehensive and broad overview of legal, technical and organizational aspects. \cite{koepke2017cybersecurity} provides a more focused literature review on incentives and barriers, complemented by a survey of 25 respondents. Of particular interest in our context are the collaborative barriers related to ``a lack of reciprocity from other stakeholders or the problem of free-riders. This barrier category also includes the risk of sharing with rivals/competitors who may use the shared information to enhance their competitive position'' \cite[p.~4]{koepke2017cybersecurity}.

\hlbreakable{
Turning to formal game-theory, two classic treatments are offered by \cite{gal2005economic,gordon2003sharing}, who show 
information sharing may yield benefits to firms, but also can result in free-riding.
}
Later works typically find similar results, and address the question of what the incentives needed for cooperation look like. For example, \cite{naghizadeh2016inter} study the effects of \emph{repeated interactions}, focusing on the effectiveness of monitoring regimes to detect and punish non-cooperative behavior.  Similarly, in a series of papers \citeauthor{tosh2015evolutionary} study a game where players face the binary choice of either participating in a sharing regime or not to, including its evolutionary stability \cite[see e.g.][]{tosh2015evolutionary,tosh2017risk}.

Whereas the previous treatments consider symmetric players---the potential victims of cyber attacks---asymmetric games have also been studied. \cite{laube2016economics} devise a principal–agent model of mandatory security breach reporting to authorities (such as those mandated under the GDPR and the NIS directive). Assuming imperfect audits which cannot determine for certain whether the failure to report an incident is deliberate concealment or mere lack of knowledge, \citeauthor{laube2016economics} find that it may be difficult to enact the sanctions level needed for the breach notification law to be socially beneficial.

\hlbreakable{
Whereas the works mentioned above treat for-profit parties interested in sharing and receiving information there are also non-profit actors who can participate in such arrangements.
For instance, \citet{dykstra2022economics} analyze information sharing of unclassified cyber threat information by a government institution. Such non-profit institutions may share unclassified information in order to improve social welfare, rather than maximize their own profits.
}

\hlbreakable{For a fuller literature review of game theory models of cyber security information sharing, see~\citet{laube2017strategic}, who not only summarize the literature, but also systematize it using an illuminating unified formal model. However, in our context, it is important to note that their review does not include any articles investigating information sharing among insurers. Thus, whereas information sharing \emph{between the firms at risk} is a standard component in game theoretic models of cyber risk---models which often include insurance---information sharing \emph{between the insurers} underwriting the firms at risk has not yet been formally investigated using game theory, despite the policy attention described in Section~\ref{intro}.}

Our model is inspired in the seminal work by \citet{gal1986information}, which addresses information transmission in oligopolies. 
In her model, firms can share information about their production cost, which is unknown and different for each firm. \citet{gal1986information} finds that under Cournot competition, firms chose to share information, because they benefit when competing firms make an accurate estimation of their production cost. In our model the insurers have the same cost (e.g., they compete in the same market); hence, by sharing information a firm may reduce the uncertainty of the competitor (which is what we expect in an insurance market).

\hlbreakable{
As mentioned above, we have not found any work formally investigating cyber security information sharing \emph{among insurers}.}
Instead, the work that is most closely related to ours is a qualitative study by \cite{nurse2020data}, who explore data use by cyber-underwriters in general, and the feasibility  and  utility of a  `pre-competitive dataset' shared within the industry in particular. Such a dataset is in fact precisely what is ``encouraged'' by the \cite{oecd2020cyberdata} and \cite{eiopa2020cyber}. However, the idea was met with considerable skepticism by the 12 cyber insurance professionals who participated in the focus groups conducted by \cite{nurse2020data}. They were all concerned about the implications for competitiveness, asking why incumbents would jeopardize their advantage by sharing information with market entrants. Indeed, the very structure of such a dataset was deemed sensitive, as even proposal forms are considered proprietary\hlbreakable{, even though there are published studies based on such forms, see \citet{woods2017mapping}}. ``People are insanely protective,'' remarked one participant \citep[p.~6]{nurse2020data}.


\section{Market model}
\label{model}

We use a Cournot model to study situations in which two insurers compete in a market, given that the claims (risk level) is uncertain. Before giving the formal statement of the model, it is appropriate to discuss some of the modeling choices. First, the Cournot model is an \emph{oligopoly} model. Thus, on the one hand, competition is not perfect---insurers make profits, which they would not if competition drove marginal prices down to equal marginal costs \cite[pp.~180–181]{varian1992microeconomic}. To understand why competition is not perfect, recall that economies of scale and rigorous regulation raise barriers to entry, making it harder for new insurance companies to challenge the incumbents. On the other hand, insurers are not monopolists who can raise prices arbitrarily---there is competition even among oligopolists. For cyber insurance, this is confirmed by several studies: \citet[p.~3]{nurse2020data} speak of ``an extremely competitive cyber insurance market'' and  \citet[p.~27]{woods2019does} fear that ``Competitive pressures drive a race to the bottom in risk assessment standards''. Furthermore, the Cournot model is not an uncommon choice for modeling general (non-cyber) insurance markets \citep[see, e.g.,][]{gale2002competitive,wang2003nightmare,cheng2008can,gao2016modeling}.

Second, production costs are uncertain---insurers do not know beforehand how much it will cost to produce their product, i.e. how large the indemnities owed will be. This reflects the uncertainty about cyber risk and lack of actuarial pricing described in Section~\ref{intro}: insurers underwriting cyber risks are uncertain about those risks.

Third, these uncertain production costs are assumed to be the same for all the market competitors. 
This reflects the interdependency of cyber risk described in Section~\ref{intro}: for an insurer. \hlbreakable{More precisely, cyber risk can only be managed up to a point by practices such as insuring customers in different geographical locations or from different industries. While such practices are effective against incident causes such as an outage at a payment service provider servicing a market of just one or a few countries, they are ineffective against other risks, such as the Heartbleed \citep[see, e.g.,][]{zhang2014analysis} or Log4J \citep[see, e.g.,][]{srinivasa2022deceptive} vulnerabilities, or prolonged outages at major cloud service providers \citep{Lloyds2018Cloud}. It is these risks---the ones that are difficult to manage---that is our concern here. With respect to these risks, thus, insurers can be seen} as essentially picking and insuring insureds from the very same set of eligible firms (with some firms being excluded by all insurers using similar rules-of-thumb). Thus, while the \emph{outcomes} of claims in a particular year will certainly differ, it is not unreasonable to model these outcomes with random variables representing production costs being the same for all market actors. Indeed, such an  assumption---in one form or another---is implicit in the entire discussion about data sharing.

Fourth, the uncertainties are modeled using normal random variables. \hlbreakable{The immediate rationale for this assumption is that it allows analytic calculations of conditional random variables. Therefore, it is} almost always used in the extant literature on uncertainties on oligopoly markets \citep[most prominently][ and the secondary literature citing her]{gal1986information}. \hlbreakable{However, this should be seen as a convenient mathematical approximation, not an empirical claim. Indeed, the literature on the statistics of cyber risk instead typically suggests more heavy-tailed distributions \citep[see, e.g., the review by][]{woods2021systematization}, and our model does not question that. This approximation is further discussed in Section~\ref{normal-discussion}.}

Turning to the formal model, in the Cournot competition two firms select their production levels,\footnote{It may seem unrealistic that production levels have to be chosen this way. After all, an insurance policy differs from physical goods and is not subject to the same production constraints. However, this is too simplistic---an insurer cannot scale production arbitrarily fast. Even if the constraints are not identical to those of physical production, important constraints are indeed imposed by, e.g., the ability to hire underwriters and claims managers, by access to capital (whether from investors or from bank loans), by the capacity of the brokers who act as middlemen on the cyber insurance market \citep[see, e.g.,][]{franke2017cyberinsurance,woods2019does}, and by regulation.} which determine the market price of their goods.
Let  $\mathcal{P}=\{1, 2\}$ be the set of firms and the real quantity $q_i\in \mathbb{R}$ their production, for $i\in\mathcal{P}$. 
In this case $q_i$ represents the number of policies offered.
We define the inverse demand function, i.e., the unitary price of a product as
\begin{equation}\label{eq:p_i}
 p_i(q_i, q_j) = a - b q_i - d q_j,
\end{equation}
where  $a, \, b, \, d> 0$. The value $p_i$ represents the premium, i.e., the payment that the insurer $i$ receives.

We assume that each insurer has a linear production cost $q_i C_i$, where $C_i$ is the marginal cost (the claims of each policy).
%
%
%
%
For simplicity we assume that the insurers offer identical products ($b=d$), \hlbreakable{allowing the price to be written as a function of the total production $p(q_i, q_j) = p(q_i+q_j) = a - b(q_i+q_j)$}. We also assume that the insurers have the same marginal production costs ($C_i=C$), an unknown value with distribution 
$C \sim \mathcal{N}(0, \sigma)$, where $\sigma$ is the uncertainty about the production cost.\footnote{While variance is often denoted  $\sigma^2$, for simplicity, we adhere to the notation of \cite{gal1986information} and denote it as $\sigma$.} \hlbreakable{Note that the assumption that $C$ has mean zero does not affect the results, but considerably simplifies the exposition.}\footnote{\hlbreakable{
Consider the marginal cost $C' = \bar c + C$, where $\bar c\in\mathbb{R}$ and $C\sim \mathcal{N}(0, \sigma)$ represent fixed and variable components. 
With an inverse demand function of the form $a' - b q_i -d q_j$ the profit can be rewritten as 
\begin{align}
q_i (a'  - b q_i - d q_j) - q_i (\bar c + C) =
q_i (a' - \bar c - b q_i - d q_j) - q_i C =
q_i p_i(q_i + q_j) - q_i C   
\end{align}
The last step results by making $a=a'-\bar c$.
We can set the parameter $a'$  large enough to guarantee that $a>0$. 
}}

\hlbreakable{The use of a single random cost variable merits some additional discussion. Importantly, this means that insurers do not benefit from the law of large numbers, as they would if instead there was rather a sum of $q_i$ single random cost variables---one per insured. To understand why this is reasonable, recall the discussion above about the fact that cyber risk can only be managed up to a point  by practices such as insuring customers in different geographical locations or from different industries. Some risks remain, namely, the ones stemming from irreducible interdependence, as discussed in Section~\ref{intro}. These risks---e.g., the risks that all of the insureds are hit by something like Heartbleed or Log4J---are precisely the ones that have prompted the policy interest in cyber insurance information sharing, i.e., the risks that are our concern here. These risks are well modeled by the use of a single random cost variable, and this is what the model is designed to reflect. Of course, this is not to deny that the law of large numbers works well for \emph{other} risks, including (some) other cyber risks, and that this is a cornerstone of insurance. But the model developed here aims to reflect precisely the interdependent cyber risks that cannot be tamed by the law of large numbers.}

In our model each insurer conducts a risk assessment and finds a noisy signal about the claims (i.e., the production cost), denoted $Z_i = C + E_i$, where the noise
$E_i\sim\mathcal{N}(0, m_i)$ is independent from the cost $C$.
Here $m_i$ represents the uncertainty inherent in the signal.
We assume that $Z_i$ is a private signal that depends on investments to improve the risk assessment process and its output. 
\hlbreakable{We do not explicitly model the mechanisms by which $Z_i$ can be improved. However, it is clear that many possibilities exist, ranging from better security audits before underwriting clients, to continuous SIEM-like monitoring of clients' systems, to improving DFIR (Digital Forensics and Incident Response) processes once incidents occur. It is equally clear that such possibilities entail costs.} 
Note that as opposed to \citeauthor{gal1986information}, we do not assume that firms deliberately garble any information. We do, however, assume that data quality is a real issue, and that it may be low in the absence of deliberate and costly efforts to improve it. Recall the results from the study of Swedish NIS reports: ``Making the most of the reporting requires additional quality assurance mechanisms'' \citep{Franke2021NIS}.
Concretely, an investment $h_i\geq 0$ leads to an uncertainty
\begin{equation}\label{eq:m_i}
 m_i = m_0 \alpha^{-h_i},
\end{equation}
where $m_0>0$ represents the uncertainty without any investments (e.g., when the assessment is made using publicly available data\footnote{
\hlbreakable{
Public reports, like the studies conducted by NetDiligence or the Ponemon Institute, may offer some information about the cyber risks of different industries. 
In addition, information sharing may take place among business partners, e.g., between insurance firms and their reinsurance providers and/or third parties that offer technical support.
}
}) and $\alpha >1$ represents the efficacy reducing noise.
Again, we assume that these parameters are equal for both insurers.
From \cref{eq:m_i} the investment needed to have an uncertainty $m_i$ is
\begin{equation}\label{eq:x_i}
 h_i(m_i) = \log_\alpha(m_0/m_i),
\end{equation}
which is a concave with respect to the uncertainty level $m_i$.
\cref{eq:x_i} implies that 
it's prohibitively expensive to have no uncertainty ($m_i=0$). 
Now, the profit of firm $i$, denoted $\pi_i(q_i, q_j)$, is equal to its income minus both production and risk assessment costs
\begin{equation}\label{eq:cournot}
 \pi_i(q_i, q_j) = q_i \cdot p_i(q_i + q_j) - q_i \cdot C - h_i(m_i).
\end{equation}

\subsection{Game formulation}

\begin{figure}
    \centering
    \includegraphics[width=.8 \textwidth]{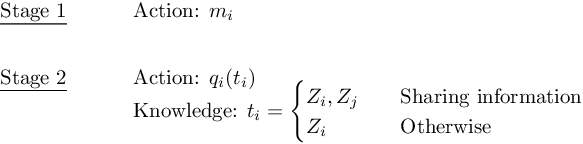}
    \caption{Our Cournot model has two decision stages: 1) insurers decide the uncertainty of their data $m_i$ (and investment in data quality); 2) insurers use their available information to choose their production level (policies offered) $q_i$.}
    \label{fig:stages}
\end{figure}

In our Cournot model the insurers make two decisions at different stages (see \cref{fig:stages}):
\begin{enumerate}
 \item In the first stage insurers make investments and commit to an information sharing policy.
This is equivalent to selecting the uncertainty level $m_i$, which in turn determines the investment $h_i(m_i)$.
We assume that 
the sharing policy is defined before the game starts.

\item In the second stage the marginal production cost $C$ is realized and each firm gets an estimate  $Z_i$.
Then the information transmission takes place and each firm uses the information available, represented as $t_i$, to select the production quantity $q_i(t_i)$.
The information available is $t_i=(Z_i, Z_j)$ when insurers share their cost estimations, otherwise it is $t_i=(Z_i)$.

\end{enumerate}

Thus, the strategy of each firm has the form $(m_i, q_i(t_i))$, which satisfies the subgame perfect equilibria if it is a Nash Equilibrium (NE) in each stage of the game.
We start by analyzing the second stage game to determine the production $q_i(t_i)$ (see \cref{fig:stages}). With this result we build the game in the first stage and then formulate the problem of selecting the optimal uncertainty (noise level) of the data $m_i$.
Some of the results in this section resemble the findings of \citet{gal1986information}, because the Cournot model there is similar (at a high level) to ours; however, the precise solution for our model is different.
We find the precise profit function for each scenario in the next section.









\subsection{Second stage (low level) game} \label{sec:game_2_risk_neutral}

In this stage $m_i$ and $m_j$ are given and each firm chooses a production $q_i(t_i)$ that maximizes their expected profits given the available information $t_i$ (\cref{fig:stages}). 
We define the game in the second stage as
\begin{equation}\label{eq:G2}
\mathcal{G}_2 = \langle \mathcal{P}, (S_i)_{i\in \mathcal{P}}, (W_i)_{i\in \mathcal{P} } \rangle,
\end{equation}
where $\mathcal{P}$ is the set of players, $S_i=\mathbb{R}$ is the strategy space, and $W_i$ is the payoff function of the i$\th$ firm, which in this case corresponds to the expected profit in a Cournot competition (see \cref{eq:cournot}) given the signal $t_i$. 
The following result shows the form of $W_i$ as a function of the optimal production $q_i(t_i)$.

\begin{lemma}\label{lemma:W_i}
 The utility of the  game $\mathcal{G}_2$ defined in \cref{eq:G2} is 
\begin{equation}
 W_i(q_i, q_j) = b q_i^2(t_i) - h_i(m_i),
\end{equation}
where the optimal production $q_i(t_i)$ satisfies
\begin{equation}\label{eq:q_i_b}
 q_i(t_i) = \frac{ a - b\E \{ q_j(t_j)| t_i\}  - \E_C \{  C | t_i \} }{ 2 b}.
\end{equation}
\end{lemma}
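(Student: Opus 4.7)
The plan is to expand the Cournot profit \cref{eq:cournot} with the identical-product assumption $p_i(q_i+q_j)=a-b(q_i+q_j)$, take conditional expectations with respect to $t_i$, maximize in $q_i$, and then simplify the envelope expression.

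First I would write the conditional expected profit. Since the strategy $q_i(t_i)$ is a function of $t_i$, and $h_i(m_i)$ is determined in the first stage and hence deterministic at the second stage, one has
\begin{equation*}
W_i = \E\{\pi_i \mid t_i\} = a\,q_i(t_i) - b\,q_i(t_i)^2 - b\,q_i(t_i)\,\E\{q_j(t_j) \mid t_i\} - q_i(t_i)\,\E_C\{C \mid t_i\} - h_i(m_i).
\end{equation*}
All the conditioning terms enter linearly through $\E\{q_j(t_j)\mid t_i\}$ and $\E_C\{C\mid t_i\}$, since $q_i(t_i)$ and $q_i(t_i)^2$ come out of the conditional expectation.

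Next I would differentiate in $q_i$ and set the derivative to zero. Because $W_i$ is a strictly concave quadratic in $q_i$ (the coefficient on $q_i^2$ is $-b<0$), the first-order condition is both necessary and sufficient and yields
\begin{equation*}
 a - 2b\,q_i(t_i) - b\,\E\{q_j(t_j) \mid t_i\} - \E_C\{C \mid t_i\} = 0,
\end{equation*}
which rearranges to the stated expression \cref{eq:q_i_b} for $q_i(t_i)$.

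Finally, to obtain the compact form of $W_i$, I would substitute the FOC back into the expected profit. Using $a - b\,\E\{q_j(t_j)\mid t_i\} - \E_C\{C\mid t_i\} = 2b\,q_i(t_i)$ inside the bracket $a - b\,q_i(t_i) - b\,\E\{q_j(t_j)\mid t_i\} - \E_C\{C\mid t_i\} = b\,q_i(t_i)$, the expression collapses to $W_i = q_i(t_i)\cdot b\,q_i(t_i) - h_i(m_i) = b\,q_i(t_i)^2 - h_i(m_i)$, as claimed. There is no real obstacle here; the only subtlety worth flagging is that $q_j(t_j)$ is a random variable from $i$'s perspective (since $t_j$ is only partly known to $i$), so one must keep the conditional expectation $\E\{q_j(t_j)\mid t_i\}$ explicit rather than replacing it by a deterministic quantity—the actual value of this expectation will be determined later once the signal structure (shared vs.\ private) and the linear response conjecture for $q_j$ are specified.
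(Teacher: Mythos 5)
Your proof is correct and follows essentially the same route as the paper: expand the conditional expected profit, derive the first-order condition $a - 2b\,q_i(t_i) - b\,\E\{q_j(t_j)\mid t_i\} - \E_C\{C\mid t_i\} = 0$ (justified by concavity, $\partial^2 W_i/\partial q_i^2 = -2b < 0$), and substitute it back to collapse $W_i$ to $b\,q_i^2(t_i) - h_i(m_i)$. Nothing is missing.
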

\begin{proof}
The expected profit in a Cournot competition (see \cref{eq:cournot}) is
\begin{equation}
 W_i(q_i, q_j) = \E \{ \pi_i(q_i, q_j) | t_i  \}.
\end{equation}
The expectation is made with respect to the unknown parameters, such as the cost $C$ (and the signal $Z_j$ when the firms do not share information). Thus,
\begin{equation}\label{eq:w_i}
 W_i(q_i, q_j) =  q_i(t_i) \left( a - b q_i(t_i) - b \E \{ q_j(t_j) | t_i\} -  \hat C_i \right)  - h_i(m_i) 
\end{equation}
 where $\hat C_i = \E_C \{  C | t_i \}$ is an estimation of the production cost and
$\E \{ q_j(t_j) | t_i\}$ is the estimated production of the adversary, given the available observation $t_i$.
The optimal production must satisfy the following first order condition (FOC)
\begin{equation}\label{eq:W_i_1st_d}
 \frac{ \partial W_i }{ \partial q_i } = a - 2 b q_i(t_i) - b   \E \{ q_j(t_j)| t_i\}  - \hat C_i = 0.
\end{equation}
In this case
the optimal production is unique, since the expected profit is concave with respect to $q_i$:
\begin{equation}\label{eq:W_i_2nd_d}
 \frac{ \partial^2 W_i }{ \partial q_i^2 } = - 2 b < 0.
\end{equation}
Now, from \cref{eq:W_i_1st_d} the optimal production satisfies
\begin{equation}\label{eq:q_i}
 q_i(t_i) = \frac{ a - b\E \{ q_j(t_j)| t_i\}  - \hat C_i }{ 2 b}.
\end{equation}
Replacing \cref{eq:q_i} in \cref{eq:w_i} we obtain
\begin{align}
 W_i(q_i, q_j) & = q_i(t_i) \left( 2 b q_i(t_i) - b q_i(t_i)  \right)  - h_i(m_i) \\
  & = b q_i^2(t_i) - h_i(m_i).
\end{align}
\end{proof}

\subsection{First stage (upper level) game} 
 
 We define the game in the first  stage (see \cref{fig:stages}) as 
\begin{equation}\label{eq:G1}
\mathcal{G}_1 = \langle \mathcal{P}, ([0, m_0])_{i\in \mathcal{P}}, (J_i)_{i\in \mathcal{P} } \rangle,
\end{equation}
where insurers can select an uncertainty level  $m_i\in[0, m_0]$ and the utility $J_i$ is the expected profit, given that firms choose $q_i(t_i)$ in the second stage (see \cref{lemma:W_i}).
In this stage $t_i$ hasn't been realized; hence, the optimal production $q_i(t_i)$ can be seen as a random variable.
The following result shows the form of $J_i$.
\begin{lemma}\label{lemma:J_i}
The utility of the game $\mathcal{G}_1$ defined in \cref{eq:G1} is
\begin{equation}\label{eq:J_i_b}
 J_i(m_i, m_j) =  b \left( \mu_i^2 + \gamma_i  \right)  - h_i(m_i),
\end{equation} 
 where $\mu_i = \E[q_i(t_i)]$ and $\gamma_i=\Var[q_i(t_i)]$ are the first and second moments of the optimal production $q_i(t_i)$, respectively  (see \cref{eq:q_i_b}).
\end{lemma}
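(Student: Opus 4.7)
The plan is to obtain $J_i$ by taking the unconditional expectation of the second-stage payoff $W_i$ derived in \cref{lemma:W_i}, and then rewriting the resulting second moment using the standard variance-mean decomposition.

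First, I would observe that $J_i(m_i, m_j)$ is defined as the expected profit in the first stage, evaluated at the subgame-perfect continuation $q_i(t_i)$ from the second stage. By the law of iterated expectations applied to \cref{eq:cournot}, this is simply the unconditional expectation of $W_i(q_i, q_j)$ over the realization of the signal $t_i$ (which, at the first stage, is still random because neither $C$ nor the noise terms $E_i, E_j$ have been realized). Since $W_i(q_i, q_j) = b\, q_i^2(t_i) - h_i(m_i)$ by \cref{lemma:W_i}, and since $h_i(m_i)$ is deterministic once $m_i$ is chosen in the first stage, I can pull $h_i(m_i)$ out of the expectation and write
\begin{equation}
 J_i(m_i, m_j) = \E[W_i(q_i, q_j)] = b\, \E[q_i^2(t_i)] - h_i(m_i).
\end{equation}

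Second, I would apply the identity $\E[X^2] = \Var[X] + (\E[X])^2$ to the random variable $X = q_i(t_i)$. Using the notation $\mu_i = \E[q_i(t_i)]$ and $\gamma_i = \Var[q_i(t_i)]$ introduced in the lemma statement, this yields $\E[q_i^2(t_i)] = \mu_i^2 + \gamma_i$, and substituting gives exactly the stated form
\begin{equation}
 J_i(m_i, m_j) = b\left(\mu_i^2 + \gamma_i\right) - h_i(m_i).
\end{equation}

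There is no real obstacle here: once one accepts \cref{lemma:W_i} and recognizes that the first-stage utility is the expectation of the second-stage utility over the signal distribution, the argument reduces to a one-line variance identity. The only subtle point worth flagging in the write-up is the justification for exchanging the expectation with the deterministic cost term $h_i(m_i)$, and making clear that $\mu_i$ and $\gamma_i$ depend on $(m_i, m_j)$ implicitly through the distribution of $t_i$ and through the best-response formula \cref{eq:q_i_b}; the concrete computation of $\mu_i$ and $\gamma_i$ in terms of $m_i, m_j, \sigma, a, b$ is deferred to the subsequent sections where the two information-sharing regimes are analyzed separately.
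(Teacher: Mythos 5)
Your proposal is correct and follows essentially the same route as the paper's own proof: take the expectation of the second-stage payoff $W_i$ over the unrealized signals, pull out the deterministic term $h_i(m_i)$, and apply the identity $\E[X^2]=\Var[X]+\E[X]^2$ to $q_i(t_i)$. Nothing is missing.
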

\begin{proof}
In the first stage the utility (expected profit) is
\begin{equation}\label{eq:J_i}
 J_i(m_i, m_j) = \E_{Z_i, Z_j} \left\{ W_i(q_i(t_i), q_j(t_j) )  \right\} =  b \E_{Z_i, Z_j} \left \{ q_i^2(t_i) \right \} - h_i(m_i).
\end{equation}
The expectation is with respect to the signals $Z_i$ and $Z_j$, which haven't been realized in  the stage. 
Let us define $\mu_i =  \E[q_i(t_i)]$ and $\gamma_i = \Var[q_i(t_i)] $.
Since $\Var [X] = \E[X^2] - \E[X]^2 $,  then $ \E[X^2] =\Var [X] + \E[X]^2$.  Hence, the expected profit in \cref{eq:J_i} can be written as \cref{eq:J_i_b}.
\end{proof}

\begin{remark}\label{remark:equilibria}
We cannot guarantee that $\mathcal{G}_1$ has a Nash Equilibrium (NE). 
Also, rather than finding the precise NE of $\mathcal{G}_1$, we focus on finding the conditions in which investing in risk assessment is feasible or not. Thus, we classify the possible NE in the following categories
\begin{itemize}
    \item Both insurers invest (but different amounts): $(m_i, m_j)$, with $m_i, m_j\in(0, m_0)$ and $m_i \neq m_j$.
    \item Both insurers invest the same amount: $(m, m)$, with $m\in(0, m_0)$.
    \item Only one insurer invests in risk assessment: $(m, m_0)$, with  $m\in(0, m_0)$.
    \item Neither insurer invests: $(m_0, m_0)$.  
\end{itemize}
\end{remark}

\subsection{Cost Estimation}

Since both the cost $C$ and the noise $E_i$ are normally distributed, the sample $Z_i=C+E_i$ is also normally distributed: $Z_i\sim \mathcal{N}(0, \sigma+m_i)$. 
This property makes it easier to find the closed form expressions of variables of interest. For instance, the estimation of the cost conditional to the sample $Z_i$ is
\begin{equation}\label{eq:hat_c_i}
 \bar C_i =  \E_C \{ C | Z_i \} 
 = \delta_i Z_i
\end{equation}
with $\delta_i = \frac{\sigma}{\sigma+m_i} $. 
Moreover, we can find the expected cost given two observations, $Z_i$ and $Z_j$, using the multivariate normal distribution
 \begin{equation}\label{eq:hat_C}
 \bar C_{ij} = \E_C \{  C | Z_i, Z_j \} = 
 \frac{1}{ k_0 } \left( \sigma m_j Z_i + \sigma m_i Z_j \right)  ,
 \end{equation}
where 
$k_0 = \sigma m_i + \sigma m_j + m_i m_j$.
\cref{fig:bivariate} shows the Bivariate distribution of $Z_i$ and $Z_j$ with parameters $m_i=m_j=2$ and $\sigma=4$.

The samples $Z_i$ and $Z_j$ are correlated through the cost $C$; hence, $\Cov(Z_i, Z_j)=\sigma$. 
For this reason the insurer $i$ can estimate the sample of it's adversary $Z_j$ as
\begin{equation}\label{eq:estimate_zj}
  \E \{ Z_j | Z_i \} = \E \{ C + E_j | Z_i \} 
  = \E \{ C | Z_i \} .
\end{equation}
Note that sharing information creates a conflict, because although insurers may benefit, they may also help the competing insurer.

%
\cref{fig:ex_cost_dist} shows an example of the cost distribution given some observations. In this case each additional observation reduces the uncertainty of the cost (variance).
Note that the insurers use $\E[C|Z_i]$ to make decisions in the second stage, rather than the rv $(C|Z_i)$
depicted in \cref{fig:ex_cost_dist}.

\begin{figure}
 \centering
  \scalebox{.6}{
  \input{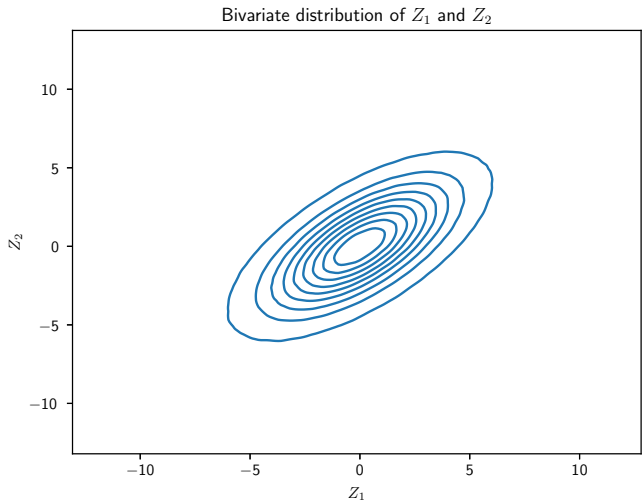}
   }
 \caption{Bivariate distribution of the samples $Z_i$ and $Z_j$. Since the samples are positively correlated, sharing information benefits an insurer but also may benefit competitors.}
 \label{fig:bivariate}
\end{figure}

\begin{figure}
 \centering
  \scalebox{.6}{
  \input{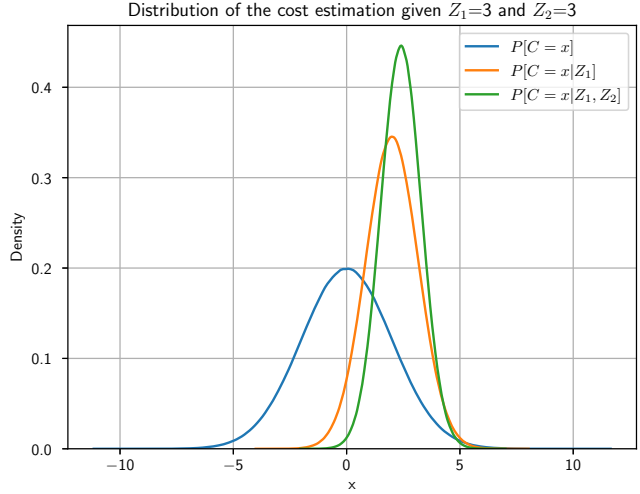}
   }
 \caption{Example of the distribution of the cost estimation with observations $Z_i$ and $Z_j$. Each additional observation reduces the uncertainty (variance) of the estimation.}
 \label{fig:ex_cost_dist}
\end{figure}

\section{Analysis of the market's equilibria} \label{equilibria}

In this section we find the optimal production $q_i(t_i)$ and the utility function in the upper level game $J_i$ for each scenario (sharing and non-sharing information). Then we find the possible equilibria of the game
and illustrate them with examples.


\subsection{Market with information sharing}

In this case the firms share their private information $Z_i$. Thus, they have the same signal $t_i = t_j = t = (Z_i, Z_j)$, and therefore,
make the same cost estimation  $\hat C_i = \hat C_j = \hat C$, where $\hat C =  \bar C_{ij}$ (see \cref{eq:hat_C}).
Moreover, since 
firms have the same characteristics and possess the same information, there is no uncertainty about the production  of the adversary, because $\E \{ q_j(t_j) | t_i\} = \E\{ q_j(t) | t \} = q_j(t)$. 
In addition,  
they produce the same quantity
$q_i(t)=q_j(t)$. Thus, from \cref{eq:q_i} we get
\begin{equation}\label{eq:q_i_sharing}
 q_i(t) = \frac{1}{3 b} ( a - \hat C ).
\end{equation}


Now, recall that the cost $C$,  $E_i$, and $E_j$ are independent normal random variables.
Thus, in the first stage, when neither $C$, $Z_i$ nor $Z_j$ have been realized, we can see $\hat C$ as a random variable (see \cref{eq:hat_C})
 \begin{equation}
  \hat C 
 = k_c C + k_i E_i + k_j E_j,
 \end{equation}
 where
$k_c =  \frac{ \sigma(m_i+m_j) }{k_0}$,
$k_i = \frac{ \sigma m_j }{k_0}$, and
$k_j = \frac{ \sigma m_i }{k_0}$, with $k_0=\sigma (m_i + m_j )  + m_i m_j $.
 Thus, the cost estimation $\hat C$ is normally distributed
$\hat C \sim \mathcal{N}( 0 , \hat \sigma )$
with
\begin{equation}
\hat \sigma  = \frac{\sigma^2}{k_0} (m_i+m_j).
\end{equation}

Now, from \cref{eq:q_i_sharing}
 the optimal production $q_i$ can be seen as a random variable
$q_i \sim \mathcal{N}( \frac{a}{3b} , \frac{\hat \sigma}{9 b^2} )$.
Therefore, the profit of the game $\mathcal{G}_1$ (see \cref{eq:J_i_b}) is
\begin{align}\label{eq:J_i_sharing}
  J_i(m_i, m_j) & = \frac{1}{9b} (  \hat \sigma + a^2 ) - h_i(m_i) \\
  & = \frac{1}{9b} \left( \frac{ \sigma^2 (m_i+m_j) }{  \sigma (m_i + m_j) + m_i m_j }  + a^2 \right) - h_i(m_i).
\end{align}


The following result shows that only one firm may invest in risk assessment.
\begin{proposition}\label{lemma:eq_sharing}
 A duopoly in which insurers share information can have two types of Nash equilibria (but only one of this scenarios can occur in a game): 
 \begin{itemize}
 \item Neither firm invests ($m_i=m_j=m_0$) if $m_0 < \frac{36 b}{\log \alpha}$
or if 
 \begin{equation}
 m_0 > \frac{36 b}{\log \alpha} \quad \text{and} \quad \sigma < \tilde \sigma = \frac{36 b m_0}{ m_0 \log \alpha - 36 b }.
 \end{equation}
 
 \item  Only one insurer invests 
 (e.g., $m_i\leq m_0$ and $m_j=m_0$) if 
   \begin{equation}
   m_0 \geq \frac{36 b}{\log \alpha} 
  \quad \text{and} \quad  \sigma \geq \hat \sigma = 
   \frac{m_0}{\Gamma-2} ,
\end{equation}
where $\Gamma^2=\frac{m_0 \log \alpha}{9 b}$.
 \end{itemize}
\end{proposition}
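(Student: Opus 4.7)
My plan is to reduce the two-player game to a one-variable best-response problem via the substitution $r_i = m_0/m_i$. The first crucial observation is that the conditional variance in \cref{eq:J_i_sharing} depends symmetrically on $r_i+r_j$: a direct substitution gives $f(m_i,m_j) = \sigma^2(r_i+r_j)/[\sigma(r_i+r_j)+m_0]$, so the first-stage payoff simplifies to
\[
  J_i(r_i,r_j) = \frac{a^2}{9b} + \frac{\sigma^2(r_i+r_j)}{9b[\sigma(r_i+r_j)+m_0]} - \log_\alpha r_i .
\]
Each firm's task is to maximize this over $r_i\in[1,\infty)$ for a given $r_j$, and I will analyze only the cases $r_j=1$ (competitor does not invest) and $r_j=r_+$ (competitor invests).

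For the $(m_0,m_0)$ equilibrium I fix $r_j=1$ and set $G(r)=J_i(r,1)-J_i(1,1)$. Short algebra yields $G(r)=\sigma^2 m_0(r-1)/\{9b(2\sigma+m_0)[\sigma(1+r)+m_0]\}-\log_\alpha r$, whose derivative is $G'(r)=\phi(r)/\{9b\,r\ln\alpha\,[\sigma(1+r)+m_0]^2\}$ with $\phi(r)=\sigma^2 m_0 r\ln\alpha-9b[\sigma(1+r)+m_0]^2$ a concave quadratic. Since $\phi(0)<0$, $(m_0,m_0)$ is NE precisely when $\phi\le 0$ on $[1,\infty)$, i.e., when $\phi$'s discriminant, which computes to $\Delta\propto \sigma m_0\ln\alpha-36b(\sigma+m_0)$, is non-positive. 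Rewriting the latter as $\sigma(m_0\ln\alpha-36b)\le 36bm_0$ produces the two subcases of the proposition: always when $m_0<36b/\log\alpha$, and only under $\sigma<\tilde\sigma$ otherwise. By symmetry the claim for $(m_0, m_0)$ follows.

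For the asymmetric equilibrium I still fix $r_j=1$ but look for an interior best response $r_+>1$ with $\phi(r_+)=0$. Substituting $s=\sqrt r$ turns the FOC into $\sigma s^2-\sigma\Gamma s+(\sigma+m_0)=0$, whose larger root is $s_+=\tfrac12\bigl[\Gamma+\sqrt{\Gamma^2-4-4m_0/\sigma}\bigr]$. Direct factoring gives $\phi(1)=9b(\sigma\Gamma-2\sigma-m_0)(\sigma\Gamma+2\sigma+m_0)$, so $\phi(1)\ge 0$ iff $\sigma\ge\hat\sigma=m_0/(\Gamma-2)$, which forces $\Gamma>2$, i.e., $m_0\ge 36b/\log\alpha$. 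When this holds, $\phi\ge 0$ on $[1,r_+]$ and $\phi<0$ afterwards, so $G$ attains its maximum strictly at the interior point $r_+$ with $G(r_+)>G(1)=0$; thus $m_i^\star=m_0/r_+<m_0$ is the unique best response. To close the loop I verify that the other player's best response to $r_i=r_+$ is $r_j=1$: the analogous $G_j$ satisfies $G_j'(1)=(1-r_+)/(r_+\ln\alpha)<0$ by substituting player $i$'s FOC, and the analogous quadratic $\phi_j$ has strictly negative discriminant whenever $s_+\ge\Gamma/2$, which in turn follows from $\sigma\ge\hat\sigma>\tilde\sigma$ (noting $\hat\sigma/\tilde\sigma=(\Gamma+2)/4>1$ for $\Gamma>2$).

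The main obstacle is the bookkeeping that ties $\tilde\sigma$ and $\hat\sigma$ cleanly to the threshold conditions $\Delta=0$ and $\phi(1)=0$, respectively; the technical content lies in keeping track of which of these two algebraic events controls the transition between the regimes. A complete proof should also briefly rule out the two symmetric-investment cases of Remark~\ref{remark:equilibria}; this follows from computing $\partial^2 J_i/\partial r_i^2$ at any symmetric interior critical point, which simplifies to $m_0/\{T r^{\star 2}\ln\alpha\}>0$ with $T=2\sigma r^\star+m_0$, so such critical points are local \emph{minima} in each player's own variable rather than maxima and therefore cannot support an NE.
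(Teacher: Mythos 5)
Your proof is correct, and it takes a genuinely different route from the paper's. The paper works directly in the $m_i$ variable: it first shows that any interior critical point must be symmetric and is a local minimum, then fixes $m_j=m_0$ and studies the quadratic obtained from the first-order condition, extracting $\tilde\sigma$ from its discriminant and $\hat\sigma$ from the requirement that the larger root exceed $m_0$. Your substitution $r_i=m_0/m_i$ buys two things: the investment cost becomes exactly $\log_\alpha r_i$, and the information term depends on the opponent only through $r_i+r_j$, which collapses the algebra (the factorization $\phi(1)=9b(\sigma\Gamma-2\sigma-m_0)(\sigma\Gamma+2\sigma+m_0)$ and the $s=\sqrt r$ quadratic recover the paper's thresholds with far less bookkeeping). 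More substantively, you verify a step the paper's proof leaves implicit: that in the candidate $(\hat m, m_0)$ the \emph{non-investing} firm's best response to $\hat m$ is indeed $m_0$ --- your observation that $G_j'(1)=(1-r_+)/(r_+\ln\alpha)<0$ by the other player's first-order condition, combined with the strictly negative discriminant of $\phi_j$, gives global rather than merely local optimality of that corner. The paper only checks the investing firm's best response to $m_j=m_0$, so your argument is, in this respect, more complete.

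Two small points to tighten. First, ``$(m_0,m_0)$ is NE precisely when $\phi\le 0$ on $[1,\infty)$, i.e., when the discriminant is non-positive'' overclaims in both links: a non-positive discriminant is sufficient but not necessary for $\phi\le 0$ on $[1,\infty)$ (both roots could lie in $(0,1)$), and $\phi\le 0$ on $[1,\infty)$ is sufficient but not obviously necessary for the corner to be a global best response (if $\phi$ changed sign twice beyond $1$, $G$ might still remain nonpositive). Since the proposition only asserts sufficiency of the stated parameter conditions, the chain you actually need --- discriminant $\le 0$ $\Rightarrow$ $\phi\le 0$ everywhere $\Rightarrow$ $G$ nonincreasing on $[1,\infty)$ $\Rightarrow$ $(1,1)$ is a NE --- is valid; just delete ``precisely.'' Second, your closing computation $\partial^2 J_i/\partial r_i^2=m_0/(Tr^{\star 2}\ln\alpha)>0$ disposes of symmetric interior critical points, but to cover the ``both invest different amounts'' case of \cref{remark:equilibria} you should add the one-line observation that the two first-order conditions share an identical left-hand side (a function of $r_i+r_j$ only), hence force $r_i=r_j$, so no asymmetric interior critical point exists. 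With those two edits the argument is complete and matches the proposition's claims, including the disjointness of the two regimes, which you already establish via $\hat\sigma/\tilde\sigma=(\Gamma+2)/4>1$.
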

The reader can find the proof of this and the following results in the appendix.

\cref{fig:eq_area_sharing} illustrates the possible equilibria for different values of $\sigma$ and $m_0$ (see \cref{lemma:eq_sharing}). 
\cref{fig:examples_eq_sharing} shows examples with the two possible NE, namely $(m_0, m_0)$ and $(\hat m, m_0)$.
In this case, free-riding can occur because at most one insurer invests in data quality (when $m_0$ is large, i.e., when the data quality is low without any investment)
.
In these examples we use the following parameters:
$a = 10$, $b = 1$, $\alpha = 3 $, and $m_0 = 1.5 \frac{36 b }{ \log \alpha}$. 
Moreover, for the NE examples we use
$\sigma = 1.2 \hat \sigma $ (\cref{fig:ex_a_sharing}) and 
$\sigma = 0.9 \tilde \sigma $  (\cref{fig:ex_b_sharing}).

\begin{figure}
 \centering
  \scalebox{.6}{
  \input{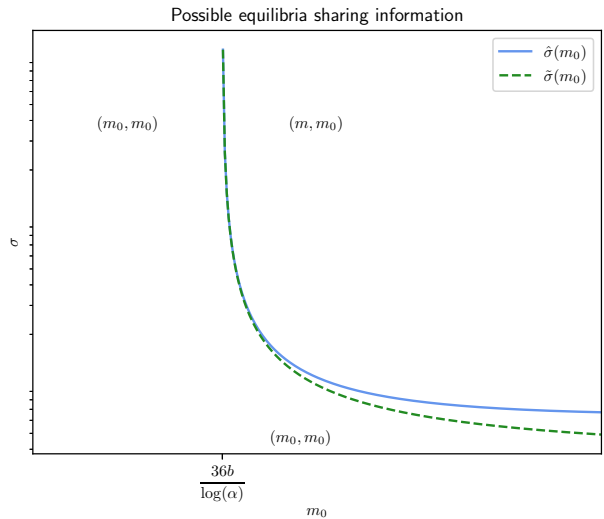}
   }
\caption{When the insurers share information at most one insurer invests in data quality.}
 \label{fig:eq_area_sharing}
 \end{figure}

\begin{figure}
 \centering
    \begin{subfigure}[b]{1\textwidth}
        \centering
        \scalebox{.6}{
        \input{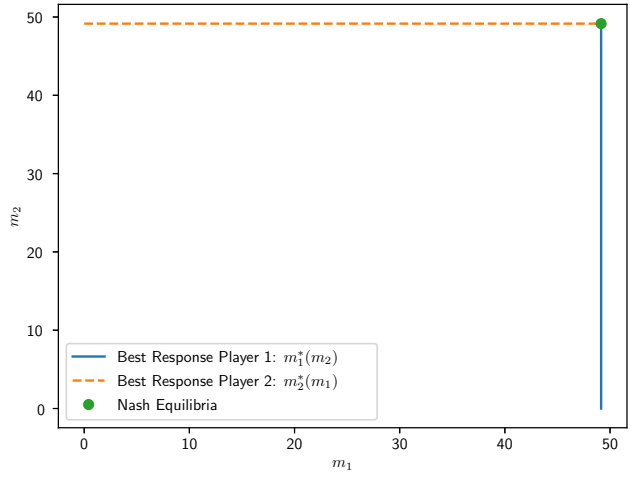}
        }
        \caption{Neither insurer invests in risk assessment.}
        \label{fig:ex_a_sharing}
    \end{subfigure}

    \begin{subfigure}[b]{1\textwidth}
        \centering
        \scalebox{.6}{
        \input{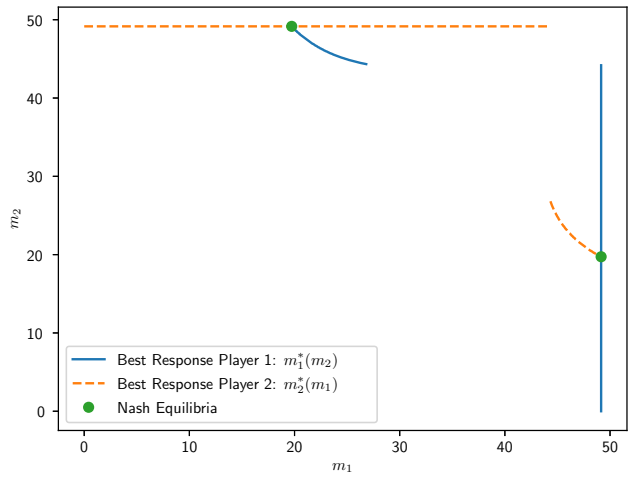}
        }
        
        \caption{Only one insurer invests in risk assessment.}
        \label{fig:ex_b_sharing}
    \end{subfigure}
 \caption{Best response of insurers in a market where sharing  information is enforced.}
 \label{fig:examples_eq_sharing}
\end{figure}

\subsection{Market without information sharing}

In this case the information available is $t_i = (Z_i)$.
From \cite{gal1986information, radner1962team} the decision rules must be affine in the vector of observations (in this case $Z_i$), therefore, 
\begin{equation}\label{eq:q_i_c1_a}
 q_i(t_i) = q_i(Z_i) = \alpha_0^i+ \alpha_1^i Z_i,
\end{equation}
for some constants $\alpha_0^i$ and $\alpha_1^i$.
From \cref{eq:q_i_c1_a} we can find the expected demand of the adversary given the information $t_i$
\begin{equation}\label{eq:exp_q_j}
\E\{ q_j(t_j) | t_i \} = \E\{ q_j(Z_j) | Z_i \} =  \alpha_0^j + \alpha_1^j \E \{ Z_j | Z_i \}
\end{equation}
We can use  \cref{eq:estimate_zj} to rewrite \cref{eq:exp_q_j} as
\begin{equation}
 \E\{ q_j(t_j) | t_i \}
= \alpha_0^j + \alpha_1^j \delta_i Z_i.
\end{equation}
Replacing \cref{eq:hat_c_i} and \cref{eq:exp_q_j} in \cref{eq:q_i} we obtain
\begin{equation}\label{eq:q_i_c1_b}
 q_i(Z_i) = \frac{ a - \delta_i Z_i - b (\alpha_0^j + \alpha_1^j \delta_i Z_i) }{ 2b }
\end{equation}
\cref{eq:q_i_c1_a} and \cref{eq:q_i_c1_b} are equivalent, therefore the coefficients $\alpha_0^i$ and $\alpha_1^i$ must  satisfy
\begin{equation}\label{eq:coef_q_i}
 \alpha_0^i = \frac{a-b \alpha_0^j }{2b}
\quad \text{and}  \quad
\alpha_1^i = -\delta_i \frac{1+b \alpha_1^j}{2b},
\end{equation}
which have the following solution 
\begin{equation}
 \alpha_0^i = \frac{a}{3b}
\quad \text{and}  \quad
\alpha_1^i = \frac{ \sigma ( 2(\sigma + m_j) - \sigma) }{b(\sigma^2 - 4(\sigma+m_i)(\sigma + m_j))}.
\end{equation}

Now, in the first stage the optimal production $q_i(t_i)$ can be seen as a random variable.
Since $Z_i$ is normal, the production is also normal (see \cref{eq:q_i_c1_a})
\begin{equation}
 q_i(Z_i) \sim \mathcal{N}( \alpha_0^i, (\alpha_1^i)^2 (\sigma+m_i) ) .
\end{equation}
Then, the  profit of $\mathcal{G}_1$ is
\begin{align}
 J_i(m_i, m_j) & = b( (\alpha_0^i)^2 + (\alpha_1^i)^2 (\sigma+m_i) ) - h_i(m_i) \\
  & = 
  \frac{a^2}{9b} +
 \frac{ \sigma^2 ( 2(\sigma + m_j) - \sigma)^2 (\sigma +m_i) }{b(\sigma^2 - 4(\sigma+m_i)(\sigma + m_j))^2} - \log_\alpha (m_0/m_i)
\end{align}


The following result states that the market can have three types of equilibria. Unlike the previous case, not sharing information creates the conditions to have both insurers investing in data quality.
\begin{proposition}\label{lemma:eq_non_sharing}
A duopoly in which insurers do not share information can have 
the following equilibria
\begin{itemize}
    \item $(m_0, m_0)$ is the only NE if 
    $m_0\leq \frac{27b}{5 \log \alpha}$

    \item $(m_0, m_0)$ is the only feasible NE 
    if 
    \begin{equation}
     m_0 \geq \frac{27 b}{5 \log \alpha} \quad \text{and} \quad 
    \sigma \leq \acute \sigma=\frac{2 m_0}{ \tilde \gamma^{1/2} - 3 },
    \end{equation}
    where 
    $\tilde \gamma = \frac{5 m_0 \log \alpha}{3 b}$.
    
    \item $(\hat m, m_0)$ and  $(\hat m, \hat m)$ are feasible NE if
    \begin{equation}
     m_0 > \frac{9 b }{ \log \alpha} \quad \text{and} \quad 
     \sigma \geq \breve \sigma = \frac{2 m_0}{ \hat \gamma^{1/2} - 3 }, 
    \end{equation}
    where
    $\hat \gamma = \frac{m_0 \log \alpha}{b}$.
    \end{itemize}
\end{proposition}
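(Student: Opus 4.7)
My plan is to characterize the equilibria by analyzing the first-order condition (FOC) of $J_i(m_i,m_j)$ with respect to $m_i$ on the interval $[0,m_0]$. Since investing to reach $m_i=0$ is prohibitively expensive (the cost $h_i$ blows up as $m_i \to 0^+$), the only candidate equilibria are interior critical points of $J_i$ or the boundary $m_i = m_0$. The three cases in \cref{remark:equilibria} that need to be examined here are $(m_0,m_0)$, $(\hat m, m_0)$, and $(\hat m,\hat m)$.

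First, I would compute $\partial J_i/\partial m_i$. Writing $u_i = \sigma + m_i$ and using $\tfrac{\partial}{\partial m_i}[u_i/(\sigma^2 - 4u_iu_j)^2] = (\sigma^2 + 4u_iu_j)/(\sigma^2 - 4u_iu_j)^3$ together with $\tfrac{\partial}{\partial m_i}[-h_i(m_i)] = 1/(m_i \ln\alpha)$, the FOC becomes
\begin{equation}
\frac{\sigma^2 (\sigma + 2m_j)^2 (\sigma^2 + 4 u_i u_j)}{b (4 u_i u_j - \sigma^2)^3} = \frac{1}{m_i \ln \alpha}.
\end{equation}
Using the factorization $4u_iu_j - \sigma^2 \big|_{m_i=m_j=m} = (\sigma + 2m)(3\sigma+2m)$ will be the key tool for simplification. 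For the boundary candidate $(m_0,m_0)$, I would evaluate the derivative at $m_i = m_j = m_0$ and determine when it is non-negative (so that reducing $m_i$ below $m_0$ is strictly harmful); this yields a polynomial inequality in $\sigma$ and $m_0$ that, after factoring, should reduce to the stated $m_0 \leq 27b/(5\ln\alpha)$ (always) or $\sigma \leq \acute\sigma$ with $\tilde\gamma = 5m_0\ln\alpha/(3b)$. The degenerate case $\sigma \to \infty$ or $\sigma \to 0$ can be used to locate the ``pivot'' $27b/(5\ln\alpha)$.

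Next, for the asymmetric candidate $(\hat m, m_0)$, I would fix $m_j = m_0$ in the FOC, solve for $\hat m \in (0, m_0)$, and then verify that $\partial J_j/\partial m_j \geq 0$ at $(m_i,m_j) = (\hat m, m_0)$, so that firm $j$'s best response is indeed the corner $m_0$. For the symmetric interior candidate $(\hat m, \hat m)$, I would impose $m_i = m_j = \hat m$ in the FOC and solve the resulting equation; the factorization above should reduce it to a manageable form from which the threshold $\sigma \geq \breve\sigma$ (with $\hat\gamma = m_0\ln\alpha/b$) emerges, together with the necessary condition $m_0 > 9b/\ln\alpha$ for $\breve\sigma$ to be well-defined. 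In both interior analyses one must check the second-order condition, because $J_i$ is not globally concave in $m_i$; I expect that the critical point identified is a local max on $(0,m_0)$, while the comparison with $J_i(m_0, m_j)$ handles global optimality.

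The hardest part will be the algebraic reductions: the FOC gives a high-degree polynomial relation in $(m_i, m_j, \sigma, m_0)$, and extracting the clean closed-form thresholds $\acute\sigma$, $\breve\sigma$ and the specific numerical constants $27/5$ and $9$ requires careful cancellation, likely by dividing through by $(\sigma+2m)^3$ and collecting terms so that the resulting condition becomes a quadratic in $\sigma$ whose discriminant determines feasibility. A subsidiary challenge is showing that outside the stated regions the corresponding candidate does \emph{not} satisfy the FOC with the correct sign of the second-order condition, which is what allows the results to be stated as necessary and sufficient rather than merely sufficient.
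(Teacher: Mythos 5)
Your skeleton (KKT at the boundary, FOC for interior points, second-order checks) is the right starting point, but two of your expectations are wrong in ways that would derail the proof. First, the stated thresholds are \emph{not} obtained by exactly factoring the corner FOC, and the conditions are not necessary and sufficient. Writing $x=\sigma+m$, the marginal profit on the diagonal is $\frac{1}{m\log\alpha}-\frac{\sigma^2}{b}f_1(x,x)f_2(x,x)$ with $f_1(x,x)=\frac{4x^2+\sigma^2}{4x^2-\sigma^2}$ and $f_2(x,x)=\frac{1}{(2x+\sigma)^2}$; setting this to zero at $x=\sigma+m_0$ gives an equation of degree higher than two in $\sigma$ with no clean root. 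The paper instead sandwiches $1<f_1\le \tfrac{5}{3}$ and solves the two resulting \emph{quadratics} in $\sigma$ (its Lemma~\ref{eq:sol_sigma}); that is exactly where the constants come from ($\tilde\gamma=\tfrac53\hat\gamma$, and $\tfrac{27b}{5\log\alpha}=\tfrac{3}{5}\cdot\tfrac{9b}{\log\alpha}$). So $\acute\sigma$ and $\breve\sigma$ are only sufficient bounds, with a silent gap $\acute\sigma<\sigma<\breve\sigma$; chasing an exact factorization will not reproduce them. Relatedly, for the first bullet you only evaluate the derivative at $(m_0,m_0)$, but $J_i'(m_0,m_0)\ge 0$ merely makes the corner a local best response to $m_0$ (and $J_i$ is not concave in $m_i$); to get ``only NE'' you need $J_i'(m,m)\ge 0$ uniformly along the whole diagonal (which is what the bound $J_i'(m,m)\ge \frac{1}{m_0\log\alpha}-\frac{5}{27b}$ delivers), and for the second bullet you additionally need monotonicity of the best response in $m_j$ to exclude coexisting equilibria (the paper's Lemma~\ref{lemma:single_NE_extreme}).

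Second, your candidate list is incomplete. Equating the two firms' FOCs for an interior critical point yields $\frac{(2x_i-\sigma)^2}{x_i-\sigma}=\frac{(2x_j-\sigma)^2}{x_j-\sigma}$, which has \emph{two} solution branches: $m_i=m_j$ and $m_j=\sigma^2/(4m_i)$. The second, asymmetric family is a genuine candidate that must be dispatched before any ``only NE'' claim; the paper does this by showing it is a local maximum only for $m$ in a narrow band around $\sigma/2$ and that its feasibility implies feasibility of the symmetric equilibrium (Lemmas~\ref{lemma:feasibility_free_rider} and~\ref{lemma:eq_equivalence}), so the analysis can be reduced to the diagonal. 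Without addressing this branch, and without the second-order restrictions it carries ($\sigma>2m$ for the symmetric branch), the enumeration of equilibria is not justified.
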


\cref{fig:eq_regions_ns} shows the feasible equilibria (when insurers do not share information) for different values of $\sigma$ and $m_0$.
Investments in data quality occur only when $m_0$ is large, that is, when the data quality is low without any investment.
\cref{fig:examples_eq_nonsharing} shows two examples of the possible equilibria in the market without sharing information.
In particular, \cref{fig:ex_b_nonsharing} shows that, unlike in the market that with information sharing enforced, one or both firms can invest in data quality in the equilibria.
In these examples $m_0 = 1.5 \frac{36 b}{\log \alpha} $ (same as the example in \cref{fig:examples_eq_sharing}).
\cref{fig:ex_a_nonsharing} has $\sigma=0.9 \tilde \sigma$ and 
\cref{fig:ex_b_nonsharing} $\sigma=1.2 \hat \sigma$.

\begin{figure}
 \centering
  \scalebox{.6}{
  \input{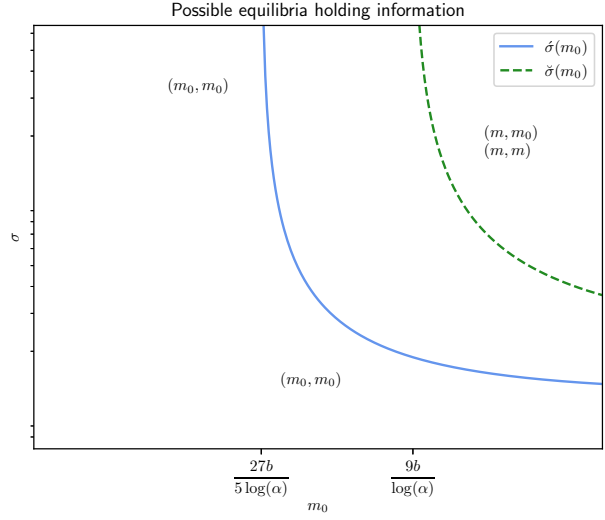}
   }
\caption{Without information sharing the insurers make symmetric or a single-sided investments.}
 \label{fig:eq_regions_ns}
\end{figure}

\begin{figure}
 \centering
    \begin{subfigure}[b]{1\textwidth}
        \centering
        \scalebox{.6}{
        \input{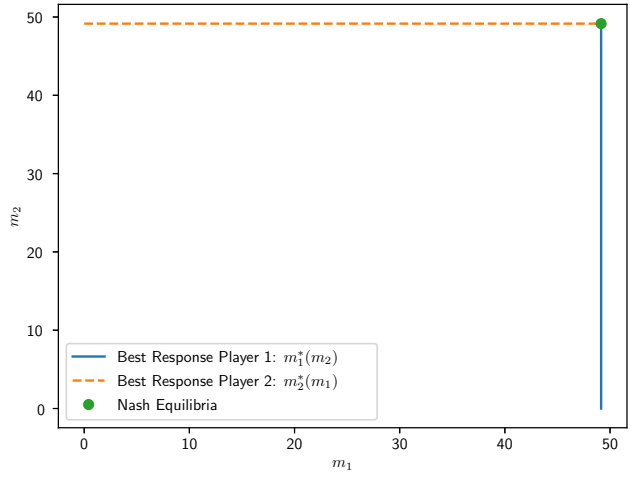}
        }
        \caption{Neither insurer invests in risk assessment.}
        \label{fig:ex_a_nonsharing}
    \end{subfigure}

    \begin{subfigure}[b]{1\textwidth}
        \centering
        \scalebox{.6}{
        \input{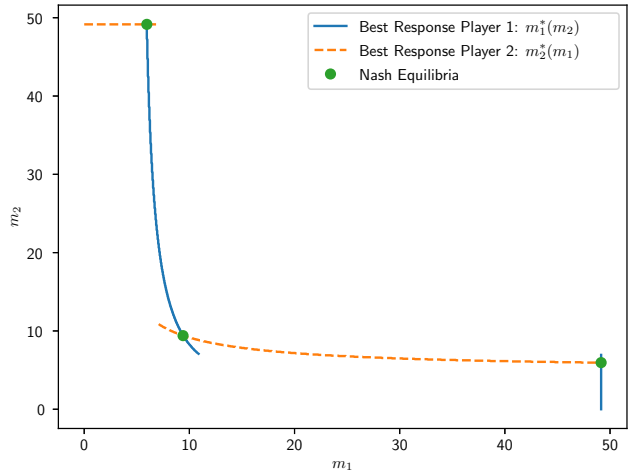}
        }
        
        \caption{Both insurers invest the same amount of resources in risk assessment or only one firm invests.}
        \label{fig:ex_b_nonsharing}
    \end{subfigure}
 \caption{Best response of insurers when they do not share information.}
 \label{fig:examples_eq_nonsharing}
\end{figure}

\modification{
\section{Relaxing the normal approximation}
\label{normal-discussion}

In this section we argue that the previous results are also valid---to some extent---when we consider extreme events in the costs, i.e., cost distributions with heavier right tails than the normal distribution, as is often the case with cyber risks.
Let us consider a cost distribution with pdf
\begin{equation}
\Prob[C = x] = 
\begin{cases}
w_1 f_1(x) & \quad \text{ if } -\infty < x \leq x_0 \\
w_2 f_2(x) & \quad \text{ if } x_0 \leq x < \infty
\end{cases}
\end{equation}
where $w_1, w_2 \geq 0$, $w_1+w_2=1$,  $f_1$ is a normal pdf,  $f_2$ a Generalized Pareto Distribution (GPD). In this case, claims that exceed the threshold $x_0$ are modeled with the GPD.
Now, an estimation of the cost given some information $t_i$ is
\begin{equation}
\E[C|t_i] = \int_{-\infty} ^ {\infty} x \Prob[C=x|t_i] dx.
\end{equation}
Let us decompose the estimation into two terms, one corresponding to the most frequent events and one to the tail
\begin{align}
\E[C|t_i] &
= \int_{-\infty}^{x_0} x w_1 f_1(x | t_i ) dx + \int_{x_0} ^ {\infty} x w_2 f_2(x|t_i) dx \\
& = \bar C_i + \epsilon_i.
\end{align}
Here $\bar C_i$ and $\epsilon_i$ represent the lower and upper estimates of the cost.
We assume that the first term $\bar C_i$ is close to the cost estimation assuming a normal distribution.
Intuitively, estimations assuming a normal distribution ignore the contribution of the tail.

The optimal amount of policies issued by each insurer---see \cref{eq:q_i_b}---then becomes\footnote{
We can use \cref{lemma:W_i,lemma:J_i} because they are independent of the distributions of $C$ and $t_i$.
}
\begin{equation}\label{eq:q_i_tail_}
q_i(t_i) = \frac{a - b \E\{ q_j(t_j)|t_i \} - \bar C_i - \epsilon_i(t_i)}{2 b}.
\end{equation}
Note that the optimal production is lower when we consider the costs from the tail.
Now, let us express the optimal production as
\begin{equation}\label{eq:q_i_tail}
q_i(t_i) = \tilde q_i(t_i) - \delta_i(t_i)
\end{equation}
where $\tilde q_i(t_i)$ is the optimal production assuming a normal distribution.
Substituting \cref{eq:q_i_tail} into \cref{eq:q_i_tail_} we obtain
\begin{equation}
q_i(t_i) = \frac{a - b \E\{ \tilde q_j(t_j) - \delta_j(t_j)|t_i \} - \bar C_i - \epsilon_i(t_i) }{2 b}.
\end{equation}
The previous expression can be rewritten as 
\begin{align}\label{eq:q_i_tail_b}
q_i(t_i) & = \frac{a - b \E\{ \tilde q_j(t_j) |t_i \} - \bar C_i }{2 b} - \frac{b \E\{ \delta_j(t_j)|t_i \} + \epsilon_i(t_i)}{2 b}\\
& = \tilde q_i(t_i) - \frac{b \E\{ \delta_j(t_j)|t_i \} + \epsilon_i(t_i)}{2 b}.
\end{align}
The last step results from using \cref{lemma:W_i} to get the optimal production when we estimate the cost assuming a normal distribution (i.e., using the cost estimate $\bar C_i$). 
From \cref{eq:q_i_tail} and \cref{eq:q_i_tail_b} conclude that
\begin{equation} \label{eq:delta_i}
\delta_i(t_i) = \frac{b \E\{ \delta_j(t_j)|t_i \} + \epsilon_i(t_i)}{2 b}.
\end{equation}
Through \cref{eq:delta_i} we express the impact of the tail in the optimal decision (in the second stage of the game). Now we can analyze how the tail affects the game in the first stage. Concretely, the utility function becomes (see \cref{lemma:J_i}) 
\begin{equation} \label{eq:J_i_tail}
J_i(m_i, m_j) = b( \E[\tilde q_i - \delta_i]^2 + \Var[\tilde q_i - \delta_i] ) - h_i(m_i).
\end{equation}
Note that $\delta_i$ depends on both $\epsilon_i$ and $\epsilon_j$. Thus, we argue that if the tail of the cost has finite mean and variance, then we can approximate  \cref{eq:J_i_tail} with
\begin{equation}
  \tilde  J_i (m_i, m_j) = b( \E[\tilde q_i ]^2 + \Var[\tilde q_i] ) - h_i(m_i),
\end{equation}
where $\tilde  J_i (m_i, m_j)$ represents the utility in the first stage that we obtain using the normal approximation.
If the tail of the cost has finite mean and variance,\footnote{Heavy tail distributions like the Lognormal distribution, sometimes used to model the size of cyber incidents \citep[][]{woods2021systematization}, satisfy  $\E[\epsilon_i]<\infty$ and $\Var[\epsilon_i]<\infty$.} then we can find some bounds $\ubar \phi_i, \bar\phi_i >0$ s.t.
\begin{align}
    J_i (m_i, m_j) \geq \tilde  J_i (m_i, m_j) - \ubar \phi_i \\
    J_i (m_i, m_j) \leq \tilde  J_i (m_i, m_j) + \bar \phi_i .
\end{align}
It follows from \cref{lemma:eNE} that the NE obtained assuming a normal distribution is close to the NE that we would obtain considering the tail of the distribution \citep[see the discussion of $\epsilon$-equilibria in][]{myerson1978refinements, fudenberg1986limit}. How close depends on the size of the tail. 
}
%

\section{Discussion and Conclusions \label{discussion}}

\cref{fig:comparison} compares the equilibria resulting with each information policy. 
When insurers share information, they tolerate data with poorer quality (i.e., large $m_0$) before starting to invest.
Also, forcing information sharing may reduce the investment in data quality of the risk assessments, because the NE changes from $(m, m_0)$ or $(m, m)$ to a NE where neither insurer invests $(m_0, m_0)$ (see region A).
Likewise, mandatory information sharing can restrict the NE to a situation of free-riding, because $(m, m)$ is not feasible (see region B).

\begin{figure}
 \centering
  \scalebox{.6}{
  \input{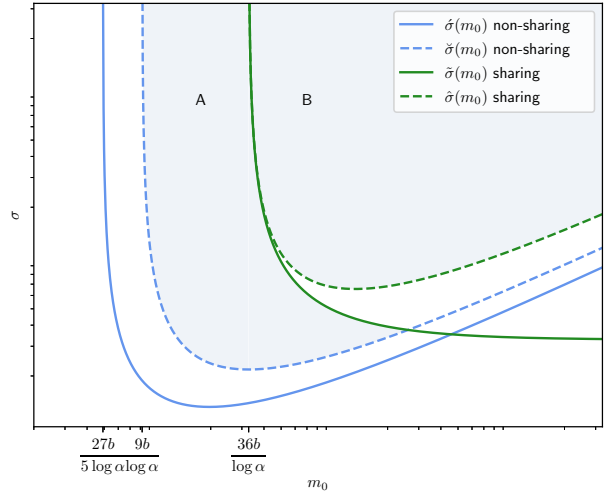}
   }
\caption{Regions where a change in the information sharing policy may change the NE. In region A the NE changes from $(m, m_0)$ or $(m, m)$ to a NE where neither insurer invests $(m_0, m_0)$. Mandatory information sharing can create situations of free-riding, because $(m, m)$ is not feasible (see region B).}
 \label{fig:comparison}
\end{figure}

These results illustrate important concerns about mandatory information sharing about cyber incidents in insurance. With no way to enforce data quality, a mandatory information sharing regime may be (i)~counterproductive if leading to no investment in data quality---an $(m_0, m_0)$ NE---or (ii)~unfair if leading to free-riding---an $(m, m_0)$ NE. This result is a bit reminiscent of \citet{laube2016economics}, who show that mandatory security breach reporting to authorities may have adverse effects unless it can be determined whether the failure to report an incident is deliberate concealment or mere lack of knowledge.

Needless to say, the model employed is simplified, and does not capture the full complexity of reality. \hlbreakable{A first aspect of this} is that it is exceedingly difficult to determine the parameters needed to facilitate exact calculations according to the model. Nevertheless, the model does show what we believe are important qualitative properties of mandatory information sharing situations: potentially undesirable transitions to counterproductive or unfair NEs if mandatory information sharing is enacted. \hlbreakable{A second aspect of this relates to the normal approximation discussed in Section~\ref{normal-discussion}. A third aspect is that the model treats duopoly rather than the more general oligopoly situation. Now, there is good reason to believe that our results should generalize to the oligopoly situation---\citet[see 3.a, p.~263]{raith1996general} has shown that for Cournot markets, the results from \citet{gal1986information} and similar duopoly  studies are valid for oligopolies as well. A detailed investigation of the oligopoly case, however, is beyond the scope of this paper. Nonetheless, 
 we would expect that insurers would have lower incentives to invest in data quality for two reasons: 1) the profit of firms will be lower with each additional competitor; and 2) 
imposing data sharing in an oligopoly will give each firm access to more data.
}

\hlbreakable{
In addition to the simplifications of the model, the NE concept also has some limitations. 
Concretely, the existence of a particular NE only says that if the players are there, none of them has anything to gain from unilaterally deviating. However, if they are not there, the NE concept does not provide any mechanism for how the NE could be reached. This means that it is not possible to say which outcome will actually occur when there are multiple NEs. 
Despite this, our analysis is important because it reveals strategic tensions between the players. 
For example, in practice, situations that create opportunities for free-riding may result in no investments, because each firm will try to free-ride.
}

Future research directions include analyzing whether sharing information policies benefit insurers and consumers (despite of creating free-riding scenarios) and designing incentives that could improve the data quality. Also, it would be interesting to contemplate other cost and risk aversion functions on the part of the insurers, \hlbreakable{as well as extending the treatment in Section~\ref{normal-discussion} of alternative cost distributions}.


\appendix

\section{Additional results and proofs}

The following results show some properties of the decision of each firm.
We use these results to analyze the equilibria of the market with different data sharing policies.

\begin{lemma}\label{lemma:properties_eq}
Let $m_i$ be a feasible decision in the game $\mathcal{G}_1$ when the adversary selects $m_j$. The following is satisfied
\begin{itemize}
\item $m_i=0$ is not feasible

\item  $m_i=m_0$ is feasible if $J_i ' (m_0, m_j) \geq 0$

 \item $0<m_i<m_0$ is feasible if 
$  J_i ' (m_i, m_j) = 0$ and
$  J_i '' (m_i, m_j) \leq 0$,
\end{itemize}
where $J_i'$ and $J_i''$ represent the first and second derivative of $J_i$ with respect to $m_i$.
\end{lemma}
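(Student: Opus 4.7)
The statement is essentially the KKT characterization of the best response $m_i^*$ to the opponent's choice $m_j$ on the compact interval $[0, m_0]$, so the plan is to treat it as a constrained one-dimensional maximization of $J_i(\cdot, m_j)$. First, I would dispose of $m_i = 0$ by observing from \cref{eq:x_i} that $h_i(m_i) = \log_\alpha(m_0/m_i) \to +\infty$ as $m_i \downarrow 0$, whereas the gross revenue term $b(\mu_i^2 + \gamma_i)$ in \cref{eq:J_i_b} remains bounded (since $\mu_i$ and $\gamma_i$ are bounded functions of $m_i$, $m_j$, $\sigma$ and $m_0$). Hence $J_i(m_i, m_j) \to -\infty$ as $m_i \downarrow 0$, and any point in $(0, m_0]$ strictly dominates $m_i = 0$, so $m_i = 0$ is never a best response.

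Next, for an interior best response $m_i \in (0, m_0)$, standard unconstrained optimization on an open interval applies: a necessary condition is the FOC $J_i'(m_i, m_j) = 0$, and a necessary condition for a local maximum is the SOC $J_i''(m_i, m_j) \leq 0$. (If $J_i''(m_i, m_j) > 0$ at a stationary point, that point would be a local minimum, so the best response would have to lie elsewhere in $[0, m_0]$.)

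For the boundary case $m_i = m_0$, the constraint $m_i \leq m_0$ is active and the constraint $m_i \geq 0$ is slack. The KKT condition reduces to requiring that $J_i$ be non-decreasing in $m_i$ as $m_i \uparrow m_0$, i.e.\ $J_i'(m_0, m_j) \geq 0$; otherwise a small decrease in $m_i$ (which is feasible since $m_i = m_0$ is interior on the left) would strictly increase $J_i$, contradicting optimality.

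The main obstacle is essentially bookkeeping rather than conceptual: one must verify smoothness of $J_i$ on $(0, m_0]$ so that the derivatives $J_i'$ and $J_i''$ are well-defined, and confirm that the revenue term stays bounded so the divergence at $m_i = 0$ really comes from $h_i$. Both follow directly from the explicit closed forms derived in Section~\ref{equilibria} (see \cref{eq:J_i_sharing} for the sharing case, and the analogous expression for the non-sharing case), together with $\alpha > 1$ in \cref{eq:m_i}. Once these smoothness and boundedness facts are in hand, the three bullets are just the standard KKT conditions for maximizing a smooth function on $[0, m_0]$ with the left endpoint excluded by the divergence of $h_i$.
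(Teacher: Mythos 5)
Your proposal is correct and follows essentially the same route as the paper, which maximizes $J_i(\cdot,m_j)$ over $[0,m_0]$ by writing out the KKT conditions with explicit multipliers for the constraints $-m_i\le 0$ and $m_i-m_0\le 0$ and then reading off the three cases. The only (harmless) variation is in excluding $m_i=0$: the paper notes that $J_i'(m_i,m_j)\to+\infty$ as $m_i\to 0$, so stationarity cannot hold with a nonnegative multiplier, whereas you use the equivalent observation that $h_i(m_i)\to+\infty$ forces $J_i\to-\infty$, so that endpoint is strictly dominated.
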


\begin{proof}[Proof of \cref{lemma:properties_eq}]
We formulate the decision of each firm with the following optimization problem
\begin{equation}
 \begin{matrix}
  \max\limits_{m_i} & J_i(m_i, m_j) \\
  \text{s. t.} & -m_i \leq 0 \\
                    & m_i-m_0 \leq 0
 \end{matrix}
\end{equation}
whose Laplacian is
\begin{equation}
 L(m_i, \mu_1, \mu_2) = -J_i(m_i, m_j) - m_i \mu_1 + (m_i-m_0)\mu_2 .
\end{equation}
The noise level
$m_i$ is a local maxima if the following necessary conditions are satisfied
\begin{equation}
 -J_i ' (m_i, m_j) -\mu_1 + \mu_2 = 0
\end{equation}
and
\begin{equation}
- m_i \mu_1 + (m_i-m_0)\mu_2 = 0 
\end{equation}
with $\mu_1, \mu_2 \geq 0$.
Moreover, the second order sufficient condition is
\begin{equation}
 J_i''(m_i, m_j)  \leq 0.
\end{equation}

Now let us analyze the strategies that the insurer $i$ may choose:
\begin{itemize}
 \item $m_i=0$ is a valid solution if $\mu_2=0$, $\mu_1 \geq 0$ and 
\begin{equation}
  -J_i ' (0, m_j) -\mu_1 = 0
\end{equation}
however, this is not possible because
\begin{equation}
 \lim_{m_i\rightarrow 0} J_i '(m_i, m_j) \rightarrow \infty.
\end{equation}

\item $m_i=m_0$ is a valid solution if $\mu_2\geq0$, $\mu_1 = 0$ and 
\begin{equation}
  -J_i ' (m_0, m_j) +\mu_2 = 0
\end{equation}
which means that 
\begin{equation}
  J_i ' (m_0, m_j) \geq 0
\end{equation}

\item $0<m_i<m_0$ is a valid solution if $\mu_2=0$, $\mu_1 = 0$ and 
\begin{equation}
  J_i ' (m_i, m_j) = 0
\end{equation}
with the second order condition 
\begin{equation}
  J_i '' (m_i, m_j) \leq 0.
\end{equation}
\end{itemize}
\end{proof}

The next result shows that, except a special case, $(\hat m, m_0)$ and $(m_0, m_0)$ cannot be NE simultaneously. Thus, we assume that only one of these NE can occur. 
\begin{lemma}\label{eq:excluding_NE}
The tuples $(\hat m, m_0)$ and $(m_0, m_0)$, with $\hat m < m_0$, can be NE simultaneously only if 
they return the same profit, i.e.,  $J_i(\hat m, m_0)=J_i(m_0, m_0)$. Otherwise, only one of them can be a NE.
\end{lemma}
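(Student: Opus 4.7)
The plan is to invoke directly the defining property of a Nash equilibrium: each player's chosen strategy must be a best response to the opponent's strategy, i.e.\ it must attain the global maximum of the player's payoff function evaluated at the opponent's play. The crucial observation is that the two candidate equilibria $(\hat m, m_0)$ and $(m_0, m_0)$ share their \emph{second} coordinate, so in both cases player~1 is responding to the same adversarial choice $m_2 = m_0$.

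First I would fix player~2's strategy at $m_0$ and consider the single-variable optimization problem $\max_{m_1 \in [0, m_0]} J_1(m_1, m_0)$. If $(\hat m, m_0)$ is a NE then, by definition, $\hat m$ lies in the argmax of this problem; if $(m_0, m_0)$ is also a NE then $m_0$ lies in the same argmax. Since any two maximizers of the same function over the same domain achieve the same value, it follows that
\begin{equation}
J_1(\hat m, m_0) \;=\; \max_{m_1 \in [0, m_0]} J_1(m_1, m_0) \;=\; J_1(m_0, m_0).
\end{equation}
By the symmetry of $\mathcal{G}_1$ (both firms share the same parameters and payoff structure), the analogous identity holds for player~2 whenever the symmetric candidates are considered, establishing the first claim. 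The contrapositive then yields the second statement: if $J_i(\hat m, m_0) \neq J_i(m_0, m_0)$, at most one of the two profiles can be a NE.

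The one subtle point worth flagging concerns the relationship between this argument and \cref{lemma:properties_eq}, whose first- and second-order conditions only identify \emph{local} optima of $J_i(\cdot, m_j)$. Two profiles satisfying those KKT-type conditions need not give the same payoff. However, to be an actual Nash equilibrium (as opposed to a locally stable candidate), the strategy must be a \emph{global} best response to the opponent's play, and the equality-of-values argument above requires only this standard notion. No further technical machinery is needed; the result is essentially an immediate consequence of the definition of NE, which is why the remark following the statement allows the paper to assume in practice that $(\hat m, m_0)$ and $(m_0, m_0)$ do not coexist as equilibria except in the nongeneric case where their payoffs happen to coincide.
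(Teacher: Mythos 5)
Your proof is correct and uses essentially the same argument as the paper: since both candidate profiles have the opponent playing $m_0$, the definition of NE forces each of $\hat m$ and $m_0$ to be a global best response to $m_0$, hence they must yield equal payoffs, and otherwise at most one can be an equilibrium. Your remark distinguishing global best responses from the local optimality conditions of the first-order analysis is a fair clarification but does not change the substance.
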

\begin{proof}[Proof of \cref{eq:excluding_NE}]
Assume that $(m_0, m_0)$ is a NE, this means that
\begin{equation}
J_i(m_0, m_0) \geq J_i(\hat m, m_0). 
\end{equation}
If the inequality is strict, i.e., $J_i(m_0, m_0) > J_i(\hat m, m_0)$, then $(\hat m, m_0)$ cannot be a NE. 
We can make the same argument when $(\hat m, m_0)$ is a NE.
\end{proof}

Now we introduce some results that specify conditions that restrict the possible NE.
\begin{corollary}\label{corol:special_eq}
If $J_i'(m, m)\geq 0$ for all $m\in[0, m_0]$, then $(m_0, m_0)$ and $(m_i, m_j)$, with $m_i, m_j<m_0$, are the only possible NE.
\end{corollary}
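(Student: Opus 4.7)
The plan is to rule out any NE of the single-sided investment form $(\hat m, m_0)$ or $(m_0, \hat m)$ with $\hat m<m_0$, by showing that the corner profile $(m_0, m_0)$ is itself a NE under the hypothesis and then invoking \cref{eq:excluding_NE}.

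First I would specialize the hypothesis to $m=m_0$ to get $J_i'(m_0, m_0)\geq 0$. By \cref{lemma:properties_eq}, this is exactly the KKT condition that makes $m_i = m_0$ a (local) best response when $m_j=m_0$; by the symmetry of the game, the same holds for player~$j$, so $(m_0,m_0)$ is a feasible NE candidate. Under the structural behavior of the payoff $J_i(\cdot, m_0)$ on $[0,m_0]$---namely, that $J_i\to-\infty$ as $m_i\to 0^+$ because $h_i$ blows up there, and that $J_i'$ smoothly interpolates between a positive value near the left endpoint and a non-negative value at $m_0$---no interior strict maximum dominates $m_0$, so $m_0$ is in fact the global best response to $m_0$ and $(m_0,m_0)$ is a genuine NE.

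Next I would argue by contradiction. Suppose $(\hat m, m_0)$ with $\hat m<m_0$ is also a NE. Since $(m_0,m_0)$ is a NE as well, \cref{eq:excluding_NE} forces $J_i(\hat m, m_0) = J_i(m_0, m_0)$. Under the generic assumption stated immediately after \cref{eq:excluding_NE} (only one of the two can be a NE outside this measure-zero tied case), this yields the required contradiction, so $(\hat m, m_0)$ is excluded. The mirror profile $(m_0, \hat m)$ is excluded by the symmetry of the two players' payoffs. Combining with the classification in \cref{remark:equilibria}, the only remaining possible NE are the corner profile $(m_0, m_0)$ and the interior pairs $(m_i, m_j)$ with $m_i, m_j<m_0$ (symmetric or asymmetric).

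The main obstacle is the intermediate step of promoting the KKT-feasible candidate $(m_0, m_0)$ to an actual NE: the hypothesis only delivers the first-order condition $J_i'(m_0,m_0)\geq 0$, not a global-maximizer statement for $J_i(\cdot, m_0)$ on $[0,m_0]$. Closing this gap requires an auxiliary unimodality or single-crossing argument for $J_i(\cdot, m_0)$ using the explicit form of the payoff in each scenario, rather than the hypothesis alone.
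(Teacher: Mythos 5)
Your skeleton matches the paper's in two of its three steps: you establish that $(m_0,m_0)$ is feasible from $J_i'(m_0,m_0)\geq 0$ via \cref{lemma:properties_eq}, and you then use \cref{eq:excluding_NE} (plus the paper's own convention of discarding the tied-profit case) to exclude one-sided equilibria $(\hat m, m_0)$. Your side remark that feasibility is only a KKT/necessary condition rather than a global-best-response statement is a fair observation, but it applies to the paper's entire framework, which deliberately works with ``feasible'' candidates throughout; no extra unimodality argument is attempted or needed at the paper's level of rigor.

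The genuine gap is that you never use the hypothesis for $m<m_0$, and as a result you fail to exclude the \emph{symmetric interior} equilibrium $(\hat m,\hat m)$ with $\hat m\in(0,m_0)$ --- indeed you explicitly keep ``symmetric or asymmetric'' interior pairs in your final list. The paper's proof uses the full-strength assumption $J_i'(m,m)\geq 0$ for \emph{all} $m\in[0,m_0]$ precisely to rule out $(\hat m,\hat m)$ via \cref{lemma:properties_eq}: an interior symmetric NE would require $J_i'(\hat m,\hat m)=0$ together with the second-order condition, which is incompatible with the marginal profit being (strictly) non-negative along the whole diagonal. In the notation of \cref{remark:equilibria}, the surviving category $(m_i,m_j)$ with $m_i,m_j<m_0$ is the \emph{asymmetric} one only. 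This exclusion is not cosmetic: the corollary is invoked in the proof of \cref{lemma:eq_non_sharing} exactly in a regime where $J_i'(m,m)>0$ for all $m$ in order to conclude that $(m_0,m_0)$ is the \emph{only} NE, which would not follow if symmetric interior equilibria were still on the table. The fact that your argument only ever evaluates the hypothesis at the single point $m=m_0$ should have signalled that part of the assumption was going unused.
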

\begin{proof}[Proof of \cref{corol:special_eq}]
Recall that from \cref{remark:equilibria} the possible equilibria are $(m_0, m_0)$, $(\hat m, \hat m)$, $(\hat m, m_0)$, and $(m_i, m_j)$.
If $J_i'(m, m) \geq 0$ for all $m$, then $(m_0, m_0)$ is a feasible NE, but $(\hat m, \hat m)$, for some $\hat m \neq m_0$, is not feasible  (see \cref{lemma:properties_eq}). Moreover, from \cref{eq:excluding_NE} we know that an equilibria of the form $(\hat m, m_0)$ is not feasible. Thus, the only remaining possibilities are $(m_0, m_0)$ and $(m_i, m_j)$.
\end{proof}

\begin{lemma}\label{lemma:single_NE_extreme}
If $J_i'(m_i, m_j)$ is decreasing with respect to $m_j$, then the best response of the insurer $i$, denoted $m_i^*(m_j)$, is decreasing with respect to $m_j$. In such cases, if $(m_0, m_0)$ is a feasible NE, then no other NE exists.
\end{lemma}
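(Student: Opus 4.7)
The plan is to establish monotonicity of the best response via the implicit function theorem, and then combine it with the symmetry of the game to squeeze out any additional equilibria.

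First I would analyze the interior best response. By \cref{lemma:properties_eq}, an interior maximizer $m_i^*(m_j) \in (0, m_0)$ satisfies $J_i'(m_i^*(m_j), m_j) = 0$ with $J_i''(m_i^*(m_j), m_j) \leq 0$. Differentiating the first-order identity with respect to $m_j$ and solving gives
\begin{equation}
\frac{d m_i^*}{d m_j} \;=\; -\frac{\partial J_i' / \partial m_j}{J_i''(m_i^*, m_j)}.
\end{equation}
Since the hypothesis states $\partial J_i'/\partial m_j < 0$ and the second-order condition gives $J_i'' \leq 0$ (strict where the implicit function theorem applies), the right-hand side is non-positive, so $m_i^*(m_j)$ is weakly decreasing on the interior. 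For the boundary case $m_i^* = m_0$, \cref{lemma:properties_eq} requires $J_i'(m_0, m_j) \geq 0$; because $J_i'$ is decreasing in $m_j$, this inequality is preserved for all $m_j' \leq m_j$, so $m_0$ remains a best response as $m_j$ decreases. Hence $m_i^*(m_j)$ is non-increasing on the whole domain $[0, m_0]$.

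Next I would rule out additional NE by contradiction, using that both firms are symmetric (so both best responses $m_i^*(\cdot)$ and $m_j^*(\cdot)$ are the same non-increasing function). Suppose $(m_0, m_0)$ is a NE, meaning $m_i^*(m_0) = m_0$, and suppose there exists another NE $(\bar m_1, \bar m_2) \in [0, m_0]^2$ with $(\bar m_1, \bar m_2) \neq (m_0, m_0)$. Without loss of generality $\bar m_2 < m_0$. By the monotonicity established above, $\bar m_1 = m_i^*(\bar m_2) \geq m_i^*(m_0) = m_0$, and since $\bar m_1 \leq m_0$, we obtain $\bar m_1 = m_0$. By symmetry applied to firm $j$, $\bar m_2 = m_j^*(\bar m_1) = m_j^*(m_0) = m_0$, contradicting $\bar m_2 < m_0$.

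The step I expect to require the most care is the boundary argument: one must check that ``decreasing in $m_j$'' of $J_i'$ combined with the constraint $m_i \in [0, m_0]$ really forces the best response to stay pinned at $m_0$ once $(m_0, m_0)$ is an equilibrium, rather than dropping to an interior maximum that could create a separate NE. This is handled by noting that decreasing $m_j$ only makes the boundary first-order condition $J_i'(m_0, m_j) \geq 0$ strictly more satisfied, so no interior maximizer $\hat m < m_0$ can yield a strictly higher payoff than $m_0$ without contradicting \cref{eq:excluding_NE}, which forbids the simultaneous coexistence of $(\hat m, m_0)$ and $(m_0, m_0)$ as equilibria (except in the degenerate tie case, which can be absorbed into the standing assumption).
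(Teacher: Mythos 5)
Your proposal is correct and follows essentially the same route as the paper: the key step in both is that $J_i'$ decreasing in $m_j$ (hypothesis) together with $J_i'$ decreasing in $m_i$ at an optimum (second-order condition) forces the best response to be non-increasing, the only cosmetic difference being that you obtain this by differentiating the first-order identity while the paper uses a direct two-point comparison that avoids dividing by $J_i''$ (which could vanish where the SOC holds only weakly). Your explicit completion of the second claim --- pinning the best response at $m_0$ on the boundary and using symmetry to contradict the existence of a second equilibrium --- is a step the paper's own proof leaves implicit, so it is a welcome addition rather than a deviation.
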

\begin{proof}
Suppose that $m_i^*(m_j)<m_0$ is the best response to the strategy $m_j\in[0, m_0]$. This means that $m_i^*(m_j)$ is a local maxima, i.e., $J_i'(m_i^*(m_j), m_j)=0$ and $J_i''(m_i^*(m_j), m_j)<0$ (see \cref{lemma:properties_eq}). 
Then the following applies for two strategies $\tilde m_j$ and $\hat m_j$
\begin{equation}\label{eq:FOC_best_resp}
    J_i'(m_i^*(\hat m_j), \hat m_j) = 
    J_i'(m_i^*(\tilde m_j), \tilde m_j) = 0.
\end{equation}
Let us assume without loss of generality that $\tilde m_j < \hat m_j$.
If $J_i'(m_i, m_j)$ is decreasing wrt $m_j$, then 
\begin{equation}\label{eq:FOC_dec_m_j}
    J_i'(m_i^*(\tilde m_j), \tilde m_j) \geq  J_i'(m_i^*(\tilde m_j), \hat m_j).
\end{equation}
Now,
replacing \cref{eq:FOC_dec_m_j} in \cref{eq:FOC_best_resp} we obtain
\begin{equation}
    J_i'(m_i^*(\hat m_j), \hat m_j) \geq
    J_i'(m_i^*(\tilde m_j), \hat m_j).
\end{equation}
Since $J_i''(m_i^*(m_j), m_j)<0$ we know that $J_i'(m_i^*(\hat m_j), \hat m_j)$ is decreasing wrt $m_i^*(\hat m_j)$. Hence, 
\begin{equation}
m_i^*(\hat m_j) \leq m_i^*(\tilde m_j).
\end{equation}
In other words, the best response function $m_i^*(m_j)$ is decreasing wrt $m_j$.
\end{proof}

\subsection{Market with information sharing}

Now we are ready to prove \cref{lemma:eq_sharing}.
\begin{proof}[Proof of \cref{lemma:eq_sharing}]
In this proof we first find the form of the possible NE and then find the conditions to guarantee that they are feasible.
From \cref{eq:J_i_sharing} the first and second derivatives of the profit are:
\begin{equation}\label{eq:J_i_1st_sharing}
 \frac{ \partial }{ \partial m_i } J_i(m_i, m_j) =
 \frac{\sigma^2}{9b} \frac{ -m_j^2 }{ ( (\sigma+m_i)(\sigma+m_j)-\sigma^2 )^2 } +
 \frac{1}{m_i \log(\alpha)}
\end{equation}
and 
\begin{equation}
 \frac{ \partial^2 }{ \partial m_i^2 } J_i(m_i, m_j) =
 \frac{\sigma^2}{9b} \frac{  2 m_j^2 (\sigma + m_j) }{ ( (\sigma+m_i)(\sigma+m_j)-\sigma^2 )^3 } -
 \frac{1}{m_i^2 \log(\alpha)}.
\end{equation}

Let us show that a NE in which both firms invest ($m_i, m_j\neq m_0$) does not exist.
Note that $(m_i, m_j)$, with $0 < m_i, m_j < m_0$, is a feasible equilibria if it satisfies $ J_i ' (m_i, m_j) =  0$ for each firm. This FOC can be rewritten as (see \cref{eq:J_i_1st_sharing}) 
\begin{equation}\label{eq:foc_sharing}
 \frac{1}{m_i m_j^2 \log(\alpha)} = \frac{\sigma^2}{9b} \frac{ 1 }{ ( (\sigma+m_i)(\sigma+m_j)-\sigma^2 )^2 }.
\end{equation}
Since the right hand side is identical for each firm we obtain
\begin{equation}
 \frac{m_j \log(\alpha)}{m_i \log(\alpha)} 
 = \frac{ m_j^2 }{ m_i^2 } .
\end{equation}
The above expression implies that 
$m_i = m_j$ (this is the only possible solution).
Replacing $m_i=m$ in \cref{eq:foc_sharing} results in
\begin{equation}\label{eq:eq_case2}
 \frac{1}{m \log(\alpha)} = \frac{\sigma^2}{9b} \frac{ 1 }{ ( 2\sigma + m )^2 } .
\end{equation}

Now, the second derivative evaluated on $m_i=m$ is 
\begin{equation}
 \frac{ \partial^2 }{ \partial m_i^2 } J_i(m, m) =
 \frac{\sigma^2}{9b} \frac{  2 (\sigma + m) }{ m( 2\sigma + m )^3 } -
 \frac{1}{m^2 \log(\alpha)}.
\end{equation}
Replacing  \cref{eq:eq_case2} in the previous expression leads to
\begin{align}
 \frac{ \partial^2 }{ \partial m_i^2 } J_i(m, m) & =
 \frac{\sigma^2}{9b} \frac{  2 (\sigma + m) }{ m( 2\sigma + m )^3 } -
 \frac{1}{m}  \frac{\sigma^2}{9b} \frac{ 1 }{ ( 2\sigma + m )^2 } \\
 & =
 \frac{\sigma^2}{9b} \frac{  1 }{ ( 2\sigma + m )^3 } > 0.
\end{align}
Thus, a pair $(m, m)$ that satisfies the FOC corresponds to a local minimum; hence, it is not a valid NE.
This implies that in the NE at least one of the firms doesn't invest at all. Following this observation we set $m_j=m_0$ and investigate possible values of $m_i$ in the NE.
A pair $(m_i, m_0)$ is a feasible NE if it satisfies the FOC
\begin{equation}
 \frac{ \partial }{ \partial m_i } J_i(m_i, m_0) =
 \frac{1}{m_i \log(\alpha)} - 
 \frac{\sigma^2}{9b} \frac{ m_0^2 }{ ( \sigma m_0 + m_i(\sigma+m_0) )^2 } 
 = 0 .
\end{equation}
The above expression leads to
%
\begin{equation}
9 b ( \sigma m_0 + m_i(\sigma+m_0) )^2 - \sigma^2 m_0^2 m_i \log \alpha = 0,
\end{equation}
which is a 
quadratic equation of the form 
\begin{equation}\label{eq:quadratic}
 A m_i^2 + B m_i + C = 0,
\end{equation}
with $ A = 9b (\sigma + m_0)^2$, $ B = 18 b \sigma m_0 (\sigma+m_0) - \sigma^2 m_0^2 \log \alpha$, and $ C = 9 b \sigma^2 m_0^2$.
%
%
The solution of \cref{eq:quadratic} has the well known form
\begin{equation}\label{eq:m_i_sol}
 m_i = \frac{-B \pm \sqrt{B^2 - 4 A C}}{2 A},
\end{equation}
where
\begin{equation}
 B^2 - 4 A C = 
 \sigma^3 m_0^3 \log \alpha (\sigma m_0 \log \alpha - 36 b(\sigma + m_0)) .
\end{equation}

Now, let us investigate the conditions in which \cref{eq:m_i_sol} has no valid solution. In other words, cases in which $J_i'(m_i, m_0) > 0 $ for all $m_i\in[0, m_0]$. 
If this happens then the game can have a single NE, namely $(m_0, m_0)$ (see \cref{lemma:properties_eq}).
\cref{eq:m_i_sol} has an imaginary value if $B^2 - 4 A C < 0$, that is, if
\begin{equation}
 \sigma (m_0 \log \alpha - 36 b) - 36b m_0 < 0.
\end{equation}
This inequality holds in the following two cases:
\begin{itemize}
 \item If $m_0 \log \alpha - 36 b > 0$ and 
 \begin{equation}
  \sigma < \frac{36 b m_0}{ m_0 \log \alpha - 36 b } = \tilde \sigma.
 \end{equation}
 
 \item If $m_0 \log \alpha - 36 b < 0$.
\end{itemize}

In  summary, $(m_0, m_0)$ is the only NE if 
$ m_0 < \frac{36 b}{ \log \alpha }  $
or if $ m_0 > \frac{36 b}{ \log \alpha }  $
and $\sigma > \tilde \sigma$.

Now, 
a real solution exists only if $B^2 - 4 A C \geq 0$, that is, 
if
 \begin{equation}\label{eq:cond_real_sol_sharing}
  m_0 \geq \frac{36 b}{\log \alpha} \quad \text{and} \quad
  \sigma \geq \frac{36 b m_0}{ m_0 \log \alpha - 36 b }.
 \end{equation}
Moreover, observe that 
$|B| > \sqrt{B^2 - 4AC}$ (since $AC>0$). Therefore, if
$B<0$ then we have two positive solutions. This occurs if
\begin{equation}
 \sigma m_0 (18 b (\sigma + m_0) - \sigma m_0 \log \alpha) < 0.
\end{equation}
The above holds when 
\begin{equation}\label{eq:cond_B_neg_sharing}
m_0 >  \frac{18 b}{\log \alpha}
\quad \text{and} \quad
\sigma > \frac{18 b m_0}{ m_0 \log \alpha - 18 b }.
\end{equation}
Note that the conditions in \cref{eq:cond_B_neg_sharing} hold when 
\cref{eq:cond_real_sol_sharing} is true. For this reason, if \cref{eq:quadratic} has real solutions, then they are positive.

Now, let us find the conditions to have a feasible NE of the form $(m_i, m_0)$, where $m_i<m_0$. 
Alternatively, we are looking for situations in which $m_i=m_0$ is not a feasible solution, i.e., when $J_i'(m_0, m_0) \leq 0$.
This is the case when \cref{eq:quadratic} has only one solution in the interval $(0, m_0]$. In other words, if the largest root is greater than  $m_0$:
\begin{equation}
 \frac{ -B + \sqrt{B^2 - 4 A C} }{2 A} \geq m_0.
\end{equation}
From the above inequality we obtain
\begin{equation}
 B^2 - 4AC \geq (B+2 A m_0)^2,
\end{equation}
which leads to 
\begin{equation}\label{eq:extreme_eq_cond}
 0 \geq  A m_0^2 + B m_0 + C.
\end{equation}
We expand 
\cref{eq:extreme_eq_cond} to obtain
\begin{equation}
0 \geq (2 \sigma)^2 + 2 (2\sigma) m_0 + m_0^2 - \sigma^2 m_0 \frac{\log\alpha}{9 b}
\end{equation}
The above in turn leads to 
\begin{equation}\label{eq:cond_internal_sol_sharing}
 0 \geq (m_0 + \sigma (\Gamma+2) )( m_0 -  \sigma(\Gamma-2) )
\end{equation}
where $\Gamma^2 = \frac{m_0 \log \alpha}{9 b}$.
Observe that $\Gamma > 2$ when \cref{eq:cond_real_sol_sharing} has real roots. Therefore,  \cref{eq:cond_internal_sol_sharing} holds if
\begin{equation}\label{eq:cond_internal_sol_sharing_b}
 \sigma \geq \frac{m_0}{\Gamma-2} = \hat \sigma.
\end{equation}
Thus, from \cref{eq:cond_real_sol_sharing,eq:cond_internal_sol_sharing_b} $(m_i, m_0)$ is a feasible solution when 
\begin{equation}
   m_0 \geq \frac{36 b}{\log \alpha}
  \quad \text{and} \quad 
   \sigma \geq \frac{m_0}{\Gamma-2} = \hat \sigma
\end{equation}
%
%
%
%
%
\end{proof}

\subsection{Market without information sharing}

Let us define $x_i = m_i + \sigma$, $x_j = m_j + \sigma$, $\bar x = \sigma + m_0$,  and the functions
\begin{equation}
 f_1(x_i, x_j) = \frac{4 x_j x_i + \sigma^2}{ 4 x_j x_i - \sigma^2 }
\end{equation}
and 
\begin{equation}
 f_2 (x_i, x_j) = \frac{(2 x_j - \sigma)^2}{ (4 x_j x_i - \sigma^2) ^ 2 }.
\end{equation}
to write the marginal profit as
\begin{equation}\label{eq:partial_J_rn_nonsharing}
 \frac{ \partial }{ \partial m_i} J_i (m_i, m_j) = 
 \frac{1}{x_i - \sigma} \frac{1}{\log \alpha}
- \frac{ \sigma^2 }{ b } 
f_1(x_i, x_j) f_2(x_i, x_j) 
\end{equation}

Before proving \cref{lemma:eq_non_sharing} we need the following results:
\begin{lemma}\label{eq:sol_sigma}
The equation
\begin{equation}\label{eq:quadratic_sigma_a}
 (2 m_0 + 3\sigma)^2 = \sigma^2 \gamma
\end{equation}
has no positive solution (i.e., $\sigma\geq 0$) when $9-\gamma > 0$.
However, if $9-\gamma > 0$, then there is only one positive solution
$\sigma = \frac{2 m_0}{ \gamma^{1/2} - 3 }$.
\end{lemma}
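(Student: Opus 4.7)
The plan is to convert the equation into a standard quadratic in $\sigma$ and then read off the sign structure.

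First I would expand $(2m_0 + 3\sigma)^2 = \sigma^2\gamma$ to obtain
\begin{equation}
(9 - \gamma)\,\sigma^2 + 12\, m_0\, \sigma + 4 m_0^2 = 0.
\end{equation}
The advantage of this form is that every coefficient except possibly the leading one is manifestly positive (recall $m_0 > 0$), so the sign analysis reduces to a check on $9-\gamma$.

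For the first claim, when $9 - \gamma > 0$ I would simply note that for any $\sigma \ge 0$ all three terms on the left-hand side are non-negative, while the constant term $4 m_0^2$ is strictly positive. Hence the left-hand side is strictly positive and cannot vanish, so no positive $\sigma$ solves the equation. (The statement as written says ``if $9-\gamma>0$, then there is only one positive solution,'' which I read as a typo for $9-\gamma<0$, i.e., $\gamma>9$; this is the interpretation consistent with the formula $\sigma=2m_0/(\gamma^{1/2}-3)$ being positive.)

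For the second claim ($\gamma > 9$), rather than applying the quadratic formula to the expanded equation I would go back to $(2 m_0 + 3\sigma)^2 = \sigma^2 \gamma$ and take non-negative square roots. Since $m_0 > 0$ and $\sigma \ge 0$ imply $2m_0 + 3\sigma > 0$, the equation reduces to $2m_0 + 3\sigma = \sigma\,\gamma^{1/2}$, which rearranges to $\sigma(\gamma^{1/2} - 3) = 2 m_0$. Under $\gamma > 9$ we have $\gamma^{1/2} - 3 > 0$, giving the unique positive solution $\sigma = 2 m_0 / (\gamma^{1/2}-3)$. Uniqueness follows because the other square root yields $2m_0 + 3\sigma = -\sigma\gamma^{1/2}$, forcing $\sigma < 0$.

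There is no real obstacle here: the proof is essentially a sign-and-roots inspection of a quadratic with a single free parameter $\gamma$. The only subtle point is the apparent typographical sign issue in the statement, which I would note but otherwise treat the two cases as $\gamma < 9$ (no positive root) and $\gamma > 9$ (unique positive root given by the displayed formula), with the boundary $\gamma = 9$ handled as a trivial linear equation yielding $\sigma = -m_0/3 < 0$.
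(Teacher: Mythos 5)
Your proof is correct, and you are right that the lemma statement contains a sign typo: the second occurrence of $9-\gamma>0$ should read $9-\gamma<0$ (equivalently $\gamma>9$), which is the reading forced by the positivity of $\sigma=2m_0/(\gamma^{1/2}-3)$ and is what the paper's own proof actually establishes. Your route differs from the paper's in both halves, in a way that is slightly more elementary. The paper writes the quadratic as $A\sigma^2+B\sigma+C=0$ with $A=9-\gamma$, $B=12m_0$, $C=4m_0^2$ and argues via the quadratic formula: when $A>0$ one has $\sqrt{B^2-4AC}<B$ (since $AC>0$), so both roots are negative; when $A<0$ one has $\sqrt{B^2-4AC}>B$, so exactly one root is positive, and simplifying $\frac{-B-\sqrt{B^2-4AC}}{2A}$ yields the stated formula. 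You instead dispose of the case $\gamma<9$ by observing that all three terms of $(9-\gamma)\sigma^2+12m_0\sigma+4m_0^2$ are nonnegative with the constant term strictly positive, and you obtain the $\gamma>9$ root by taking square roots of the original equation, using $2m_0+3\sigma>0$ to select the branch $2m_0+3\sigma=\sigma\gamma^{1/2}$ and reduce to a linear equation. Your square-root argument avoids the discriminant simplification entirely and makes uniqueness transparent (the other branch forces $\sigma<0$); the paper's version has the minor advantage of being a single uniform computation. Your remark on the boundary $\gamma=9$ is a small bonus the paper omits, though neither case of the lemma requires it.
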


\begin{proof}
We can rewrite \cref{eq:quadratic_sigma_a} as
\begin{equation}\label{eq:quadratic_sigma}
 4 m_0^2 + 12 m_0 \sigma + (9-\gamma) \sigma^2 = 0 
\end{equation}
Let $A = 9-\gamma$, $B=12 m_0$ and $C=4 m_0^2$. Since $m_0>0$, then $B, C > 0$.
The solution to \cref{eq:quadratic_sigma} has the form
\begin{equation}
 \sigma = \frac{-B \pm \sqrt{B^2 - 4 A C}}{ 2 A}.
\end{equation}
If $A>0$, then $\sqrt{B^2 - 4 A C}<B$; hence, both solution of  \cref{eq:quadratic_sigma} are negative.
On the other hand, if $A<0$, then $\sqrt{B^2 - 4 A C} > B$. In this case, \cref{eq:quadratic_sigma} has only one positive solution:
\begin{align}
 \sigma = & \frac{-B - \sqrt{B^2 - 4 A C}}{ 2 A} \\
 = & \frac{2 m_0}{ \gamma^{1/2} - 3 }.
\end{align}
\end{proof}

\begin{lemma}\label{lemma:feasibility_free_rider}
  A game $\mathcal{G}_1$ where insurers do not share information can have a NE of the form  $(m, \frac{\sigma^2}{4 m})$ if $\frac{\sqrt{3}-1}{2} \sigma \leq m \leq \frac{1+\sqrt{3}}{4} \sigma$.
\end{lemma}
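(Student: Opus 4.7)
The plan is to start by applying the first-order conditions from \cref{lemma:properties_eq} to both insurers. Using $x_i = m_i + \sigma$, $x_j = m_j + \sigma$, and $D = 4x_i x_j - \sigma^2$, the interior FOC $J_i'(m_i, m_j)=0$ can be rewritten as
\begin{equation*}
\frac{1}{m_i \log \alpha} = \frac{\sigma^2}{b} \cdot \frac{(D+2\sigma^2)(2x_j-\sigma)^2}{D^3},
\end{equation*}
with the analogous expression for insurer $j$ (obtained by swapping $x_i$ and $x_j$ in the numerator). Dividing the two FOCs and cancelling common factors gives $m_j/m_i = ((2m_j+\sigma)/(2m_i+\sigma))^2$. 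Setting $r=\sqrt{m_j/m_i}$ and clearing denominators, this equation factors as $(1-r)(2 r m_i - \sigma)=0$. The root $r=1$ corresponds to the symmetric case handled elsewhere; the other root gives $2\sqrt{m_i m_j} = \sigma$, i.e., $m_i m_j = \sigma^2/4$, which is exactly the form $(m, \sigma^2/(4m))$ claimed by the lemma.

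Next I would verify the interior second-order condition for each insurer. Differentiating $J_i'$ once more and substituting the FOC to eliminate $1/(m_i \log \alpha)$, the condition $J_i''(m_i, m_j) \leq 0$ simplifies (after multiplying through by $D^4$ and using $D>0$) to
\begin{equation*}
m_i(m_j+\sigma)(7\sigma+4s) \leq \sigma(\sigma+s)(3\sigma+2s),
\end{equation*}
where $s := m_i + m_j$. Substituting $m_j=\sigma^2/(4m)$ with $m=m_i$ and normalizing via $u = m/\sigma$, the substitution collapses nicely: $\sigma+s = (2u+1)^2\sigma/(4u)$, and the other factors reduce to similar expressions. The resulting inequality is the polynomial condition
\begin{equation*}
P(u) := 16u^4 + 24u^3 - 10u^2 - 8u - 1 \leq 0.
\end{equation*}

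The crucial algebraic observation is the factorization $P(u) = (8u^2 - 4u - 1)(2u^2 + 4u + 1)$. Because $2u^2+4u+1$ has only negative roots ($-1\pm 1/\sqrt{2}$) and is therefore strictly positive for $u>0$, the SOC for insurer $i$ reduces to $8u^2 - 4u - 1 \leq 0$, which for positive $u$ gives $u \leq (1+\sqrt{3})/4$. By symmetry, the SOC for insurer $j$ yields the same polynomial inequality in $v := m_j/\sigma = 1/(4u)$, hence $1/(4u) \leq (1+\sqrt{3})/4$, or equivalently $u \geq 1/(1+\sqrt{3}) = (\sqrt{3}-1)/2$. Intersecting the two bounds delivers $(\sqrt{3}-1)\sigma/2 \leq m \leq (1+\sqrt{3})\sigma/4$, matching the claim. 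The main obstacle I anticipate is the algebraic bookkeeping required to reduce the SOC to the polynomial $P(u)$; once the simplifications available under the constraint $m_i m_j = \sigma^2/4$ are exploited, the factorization of $P(u)$ exposes the closed-form bounds.
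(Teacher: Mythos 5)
Your proof is correct and takes essentially the same route as the paper: both arguments verify the two insurers' second-order conditions at the candidate $(m,\sigma^2/(4m))$, and your factorization $P(u)=(8u^2-4u-1)(2u^2+4u+1)$ with $u=m/\sigma$ recovers exactly the paper's sign-determining factors, namely $8m^2-4\sigma m-\sigma^2\le 0$ for one insurer and $2m^2+2\sigma m-\sigma^2\ge 0$ for the other. The only wrinkle is that your intermediate inequality $m_i(m_j+\sigma)(7\sigma+4s)\le\sigma(\sigma+s)(3\sigma+2s)$ already presupposes $m_im_j=\sigma^2/4$ rather than being the unrestricted second-order condition, which is harmless here since you only ever evaluate it on that constraint set.
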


\begin{proof}
From the second derivative of the profit $J_i''(m_i, m_j)$ and the FOC
$ J_i'(m_i, m_j) = 0$ we get
\begin{equation}
 \frac{\partial^2}{\partial m_i^2} J_i(m_i, m_j) = 
 \frac{\sigma^2}{b} \frac{ (2 x_j - \sigma)^2 }{ (4 x_i x_j - \sigma^2)^4 (x_i-\sigma) } g(x_i, x_j),
\end{equation}
where $g(x_i, x_j) =  16 x_j (\sigma^2 + 2 x_i x_j) (x_i - \sigma) - (4 x_i x_j + \sigma^2) (4 x_i x_j - \sigma^2)$. 
Observe that we only need to identify the sign of $g(x_i, x_j)$.

Now, 
replacing $m_j = \frac{\sigma^2}{4 m_i}$, i.e., $x_j = \frac{\sigma^2}{4 (x_i - \sigma)} + \sigma$,  and $x_i = m + \sigma$ we get
\begin{equation}
 g(m, \frac{\sigma^2}{4 m}) = 
 \frac{\sigma^2 (8 m^2 - 4 \sigma m - \sigma^2) ( 2 m^2 + 4 \sigma m + \sigma^2) }{ m^2 }
\end{equation}
If $m \leq \frac{1+\sqrt{3}}{4} \sigma$ then 
$8 m^2 - 4 \sigma m - \sigma^2 \leq 0$
and  $(m_i, \frac{\sigma^2}{4 m_i})$ is a local maximum.

Replacing $m_i = \frac{\sigma^2}{4 m_j}$, i.e., 
$x_i = \frac{\sigma^2}{4 (x_j - \sigma)} + \sigma$  and $x_j = m + \sigma$, we get
\begin{equation}
 g(\frac{\sigma^2}{4 m}, m) = 
 \frac{\sigma^2 (-2 m^2 - 2 \sigma m + \sigma^2) ( 8 m^2 + 8 \sigma m + \sigma^2) }{ m^2 }
\end{equation}
 If $m \geq \frac{\sqrt{3}-1}{2} \sigma$ then 
$-2 m^2 - 2 \sigma m + \sigma^2\leq0$
and  $(\frac{\sigma^2}{4 m}, m)$ is a local maxima.

In summary, $(m, \frac{\sigma^2}{4 m})$ is a feasible NE if $ \frac{\sqrt{3}-1}{2} \sigma \leq m \leq \frac{1+\sqrt{3}}{4} \sigma$.
\end{proof}

\begin{lemma}\label{lemma:feasibility_symmetric}
 A game $\mathcal{G}_1$ where insurers do not share information can have a NE of the form  $(m, m)$ if $\sigma>2 m$.
\end{lemma}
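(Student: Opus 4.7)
My plan is to mirror the structure of the proof of \cref{lemma:feasibility_free_rider}, but specialized to the diagonal $(m_i, m_j) = (m, m)$, and then reduce the feasibility condition to a polynomial inequality in $m$ and $\sigma$ that is easy to verify under the hypothesis $\sigma > 2m$.

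First, by \cref{lemma:properties_eq}, a pair $(m, m)$ with $0 < m < m_0$ is an admissible NE of $\mathcal{G}_1$ provided that the first-order condition $J_i'(m, m) = 0$ and the second-order condition $J_i''(m, m) \leq 0$ are satisfied. The FOC pins down $m$ as a function of the parameters and is not what I need to argue about here; the lemma's hypothesis $\sigma > 2m$ is only about ruling out that a candidate critical point fails the SOC. So my task reduces to verifying that, whenever $\sigma > 2m$, the sign of $J_i''(m,m)$ is non-positive.

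Next, I would invoke the expression derived in the proof of \cref{lemma:feasibility_free_rider}:
\begin{equation*}
\frac{\partial^2}{\partial m_i^2} J_i(m_i, m_j) = \frac{\sigma^2}{b}\,\frac{(2x_j-\sigma)^2}{(4x_ix_j-\sigma^2)^4(x_i-\sigma)}\,g(x_i, x_j),
\end{equation*}
where $x_i = m_i + \sigma$, $x_j = m_j + \sigma$, and $g(x_i, x_j) = 16 x_j(\sigma^2 + 2x_ix_j)(x_i - \sigma) - (4x_ix_j + \sigma^2)(4x_ix_j - \sigma^2)$. Setting $x_i = x_j = x = m + \sigma$, each prefactor $\sigma^2/b$, $(2x-\sigma)^2$, $(4x^2-\sigma^2)^4$, and $(x-\sigma)=m$ is strictly positive (recall $m>0$ and $2x = 2m + 2\sigma > \sigma$), so $\operatorname{sign}(J_i''(m,m)) = \operatorname{sign}(g(x,x))$.

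The core of the argument is then to show $g(x,x) \leq 0$ whenever $\sigma > 2m$. Expanding $g(x,x)$ with $x = m+\sigma$ and collecting terms in powers of $m$ yields a quartic of the form
\begin{equation*}
g(x, x) = 16 m^4 + 32\sigma m^3 + 16\sigma^2 m^2 - 16\sigma^3 m - 15\sigma^4.
\end{equation*}
Normalizing by $\sigma^4$ and setting $t = m/\sigma$, this becomes $P(t) := 16t^4 + 32t^3 + 16t^2 - 16t - 15$, and I need $P(t) \leq 0$ for $0 < t < 1/2$. I would verify this by checking the endpoints ($P(0) = -15 < 0$ and $P(1/2) = 1 + 4 + 4 - 8 - 15 = -14 < 0$) and arguing that $P$ has no interior zero on $[0, 1/2]$. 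The derivative $P'(t) = 64 t^3 + 96 t^2 + 32 t - 16$ has a single sign change on $[0, 1/2]$ (from $P'(0) = -16 < 0$ to $P'(1/2) = 32 > 0$), so $P$ is unimodal on this interval with a unique minimum; since both endpoint values are negative and $P$ is continuous, $P(t) < 0$ throughout $[0, 1/2]$.

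The main obstacle is really the bookkeeping in expanding $g(x, x)$ as a polynomial in $m$ and $\sigma$ --- the algebra is routine but error-prone. Once the polynomial form is obtained, the endpoint-plus-unimodality argument on $P(t)$ delivers $\operatorname{sign}(J_i''(m,m)) \leq 0$, which combined with the FOC gives feasibility by \cref{lemma:properties_eq}.
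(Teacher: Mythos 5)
Your proposal is correct and follows essentially the same route as the paper: both reduce the claim to the sign of $g(x,x)$ at a critical point, expand it as the quartic $16m^4+32\sigma m^3+16\sigma^2m^2-16\sigma^3 m-15\sigma^4$ (the paper writes this equivalently as $(2m+\sigma)^4-8\sigma^2(m+\sigma)(m+2\sigma)$), and verify negativity under $\sigma>2m$. The only difference is the last step---the paper uses the direct comparison $(2m+\sigma)^4<(2\sigma)^4=16\sigma^4<8\sigma^2(m+\sigma)(m+2\sigma)$, while you check the normalized quartic $P(t)$ on $[0,1/2]$ via endpoints and unimodality (where the ``single sign change of $P'$'' is cleanest to justify by $P''(t)=192t^2+192t+32>0$); both work.
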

\begin{proof}
 From the second derivative of the profit $J_i''(m_i, m_j)$ and the FOC
$ J_i'(m_i, m_j) = 0$ we get
\begin{equation}
 J_i''(m, m) = \frac{\sigma^2 (2 x - \sigma)^2}{b (x-\sigma)(4 x^2 - \sigma^2)^4} \left\{ 16 x (x-\sigma)(2 x^2 + \sigma^2) - (4 x^2 + \sigma^2) (4 x^2 - \sigma^2) \right\},
\end{equation}
where $x=m+\sigma$. Expanding the right hand side
\begin{equation}
 16 x^4 -  32 x^3 \sigma + 16 \sigma^2 x^2 - 16 x \sigma^3 +  \sigma^4
\end{equation}
We can reorganize as
\begin{equation}
 (2x - \sigma)^4 - 8 x \sigma^2 (x + \sigma)
\end{equation}
Replacing $x=m+\sigma$ we obtain
\begin{equation}
 (2 m + \sigma )^4 - 8 \sigma^2 (m + \sigma)(m+2\sigma)
\end{equation}
Then, if $\sigma>2 m$ we can guarantee that 
$J_i''(m, m)\leq 0$, which means that a NE o the form $(m, m)$ is feasible.
\end{proof}

\begin{lemma}\label{lemma:eq_equivalence}
Consider a game $\mathcal{G}_1$ without sharing information. If $(m, \frac{\sigma^2}{4 m})$ is a feasible solution, then $(m, m)$ is also a feasible solution, but the converse is not necessarily true. 
\end{lemma}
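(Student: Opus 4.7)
The plan is to reduce the claim to a polynomial inequality in the single variable $t = m/\sigma$ and to establish that inequality via a signed combination of the two second-order conditions that already appeared in the proof of \cref{lemma:feasibility_free_rider}. The proof of that lemma shows that feasibility of $(m, \sigma^2/(4m))$ requires
\begin{equation*}
  q_1(t) := 8t^2 - 4t - 1 \le 0 \quad\text{and}\quad q_2(t) := 2t^2 + 2t - 1 \ge 0,
\end{equation*}
which together confine $t$ to the interval $I := [\tfrac{\sqrt 3 -1}{2},\ \tfrac{1+\sqrt 3}{4}]$. Analogously, the computation inside the proof of \cref{lemma:feasibility_symmetric} shows that $(m,m)$ is feasible exactly when $(2m+\sigma)^4 - 8\sigma^2(m+\sigma)(m+2\sigma) \le 0$, which in normalized form reads
\begin{equation*}
  p(t) := 16t^4 + 32 t^3 + 16 t^2 - 16 t - 15 \le 0.
\end{equation*}
The task therefore reduces to verifying that $p(t)\le 0$ on the interval $I$.

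The main obstacle is that the clean sufficient condition $\sigma>2m$ (equivalently $t<1/2$) that is invoked in \cref{lemma:feasibility_symmetric} already fails at the upper endpoint $t=\tfrac{1+\sqrt 3}{4}\approx 0.683$, so one cannot simply chain the two lemmas. Instead I would establish $p(t)\le 0$ directly by polynomial division, which yields the decomposition
\begin{equation*}
  p(t) = q_1(t)\bigl(2t^2 + 5t + \tfrac{19}{4}\bigr) + \bigl(8t - \tfrac{41}{4}\bigr).
\end{equation*}
The quadratic factor $2t^2+5t+\tfrac{19}{4}$ has discriminant $25-38<0$ and positive leading coefficient, hence is strictly positive everywhere. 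The linear remainder $8t-\tfrac{41}{4}$ is negative on $I$ because even at the right endpoint $8\cdot\tfrac{1+\sqrt 3}{4} = 2+2\sqrt 3 < \tfrac{41}{4}$. Combined with $q_1(t)\le 0$ on $I$, both summands are non-positive, which gives $p(t)\le 0$ throughout $I$ and hence feasibility of $(m,m)$ by \cref{lemma:feasibility_symmetric}.

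For the converse I would simply exhibit a concrete $m$ for which $(m,m)$ is feasible but $(m, \sigma^2/(4m))$ is not. Any sufficiently small $m$ works: it automatically satisfies $\sigma>2m$, so \cref{lemma:feasibility_symmetric} gives feasibility of $(m,m)$; however, once $m$ is below $\tfrac{\sqrt 3 -1}{2}\sigma$ (or once $\sigma^2/(4m)$ exceeds $m_0$ and thus falls outside the admissible strategy space $[0,m_0]$), $(m, \sigma^2/(4m))$ violates either the second-order condition or the boundary constraint of \cref{lemma:feasibility_free_rider} and so is not a feasible NE.
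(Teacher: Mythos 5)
There is a genuine gap, and it stems from a misreading of what ``feasible'' requires and of what the lemma asserts. By \cref{lemma:properties_eq}, an interior point $(m,m)$ is feasible only if the \emph{first-order} condition $J_i'(m,m)=0$ holds; the second-order inequality you analyze (your $p(t)\le 0$, i.e.\ $(2m+\sigma)^4-8\sigma^2(m+\sigma)(m+2\sigma)\le 0$) is the condition from \cref{lemma:feasibility_symmetric}, and that expression is itself derived \emph{under the assumption} that the FOC already holds at $(m,m)$. Your argument never establishes the FOC, and in fact it cannot hold at the same $m$: since $\partial J_i'/\partial m_j<0$, if $m<\sigma^2/(4m)$ then $J_i'(m,m)>J_i'(m,\sigma^2/(4m))=0$, so $(m,m)$ with that same $m$ is not a critical point at all. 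The lemma (as it is used in the proof of \cref{lemma:eq_non_sharing}, where it yields feasibility of some $(\hat m,\hat m)$) asserts only the \emph{existence} of a symmetric feasible solution, generally at a different value of $m$. The paper's proof is precisely an intermediate-value argument exploiting the monotonicity of $J_i'$ in $m_j$: the two FOCs at the asymmetric point give $J_i'(m,m)\ge 0\ge J_i'\bigl(\tfrac{\sigma^2}{4m},\tfrac{\sigma^2}{4m}\bigr)$, so by continuity the diagonal FOC has a root $\hat m$ between $m$ and $\sigma^2/(4m)$. Your polynomial decomposition of $p(t)$ via $q_1(t)$ is algebraically correct, but it verifies the wrong condition at points that are not critical points of the diagonal restriction, so the conclusion does not follow.

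The same confusion undermines your treatment of the converse: ``any sufficiently small $m$'' does not make $(m,m)$ feasible, because $\sigma>2m$ is only the second-order condition; feasibility still requires $J_i'(m,m)=0$, whose solutions are isolated. The correct way to see that the converse can fail is that a root $\hat m$ of the diagonal FOC may exist while the corresponding asymmetric candidate either leaves the strategy space (e.g.\ $\sigma^2/(4\hat m)>m_0$) or violates the window of \cref{lemma:feasibility_free_rider}.
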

\begin{proof}
 Let $(m, \frac{\sigma^2}{4 m})$ be a feasible solution, which means that
 \begin{equation}\label{eq:foc_a}
  J_i' \left( m, \frac{\sigma^2}{4 m} \right) = 0
 \end{equation}
 \begin{equation}\label{eq:foc_b}
  J_i' \left( \frac{\sigma^2}{4 m}, m \right) = 0
 \end{equation}

Without loss of generality we assume that $m\leq \frac{\sigma^2}{4 m}$.
 Note that 
the marginal profit is decreasing wrt $m_j$, that is, 
\begin{equation}
\frac{\partial}{\partial m_j} J_i'(m_i, m_j) = \frac{4 \sigma^3 (2 x_j - \sigma) (\sigma^3 -4\sigma^2x_i + 8 \sigma x_i x_j - 8 x_i^2 x_j) }{ b (4 x_i x_j - \sigma^2)^4 }<0
\end{equation}
The inequality follows since $x_i \geq \sigma$.

Now, since $m< \frac{\sigma^2}{4 m}$, from \cref{eq:foc_a,eq:foc_a} we get
\begin{equation}
0 =  J_i' \left( m, \frac{\sigma^2}{4 m} \right) \leq J_i'(m, m)
\end{equation}
and
\begin{equation}
 0 = J_i' \left( \frac{\sigma^2}{4 m}, m \right) \geq J_i' \left( \frac{\sigma^2}{4 m}, \frac{\sigma^2}{4 m} \right).
\end{equation}

Since the function is continuous wrt $m_i$ and $m_j$, then there exist a  $m$ such that
$J_i'(m, m) \geq J_i'(m, m)=0 \geq  J_i'(\frac{\sigma^2}{4 m}, \frac{\sigma^2}{4 m})$.
\end{proof}

Now we are ready to prove  \cref{lemma:eq_non_sharing}.
\begin{proof}[Proof of \cref{lemma:eq_non_sharing}]
First, let us consider a scenario in which a pair $(m_i, m_j)$ with $m_i, m_j \in (0, m_0)$ is a NE, which satisfies the FOC $ J_i'(m_i, m_j) = 0$ for both firms (see \cref{lemma:properties_eq}).
We can reorganize \cref{eq:partial_J_rn_nonsharing} for each player to obtain
\begin{equation}
 \frac{-1}{ x_i - \sigma } \frac{1}{\log \alpha} \frac{1}{ (2 x_j - \sigma)^2 }
 =
 \frac{-1}{ x_j - \sigma } \frac{1}{\log \alpha} \frac{1}{(2 x_i - \sigma)^2}.
\end{equation}
The above expression leads to 
%
\begin{equation}
 \frac{(2 x_i - \sigma)^2}{ x_i-\sigma } = \frac{(2 x_j - \sigma)^2}{ x_j-\sigma },
\end{equation}
which has two solutions, namely 
$m_i=m_j$ and $m_j = \frac{\sigma^2}{4 m_i}$. Thus, the NE can have the following form:
\begin{equation}
 (\hat m, \hat m) \quad \text{and} \quad \left(\tilde m, \frac{\sigma^2}{4 \tilde m} \right),
\end{equation}
for some $\hat m, \tilde m < m_0$.

An equilibria of the form 
$(\tilde m, \frac{\sigma^2}{4 \tilde m})$ has multiple restrictions.
Concretely, it
is a local maxima only if $\frac{\sqrt{3} - 1 }{2}\sigma \leq \tilde m \leq \frac{1+\sqrt{3}}{4}\sigma$ (see 
\cref{lemma:feasibility_free_rider}). In addition, it 
restricts the range of $\tilde m$ to $[\frac{\sigma^2}{4m_0}, m_0]$; 
hence, such solutions are feasible only if $m_0>\frac{\sigma^2}{4 m_0}$ (i.e.,  $2 m_0 > \sigma $).

\cref{lemma:eq_equivalence} shows that if $(\tilde m, \frac{\sigma^2}{4 \tilde m})$ is a feasible NE, then $(\hat m, \hat m)$ is also feasible (however, the converse is not necessarily true).
Note that a NE of the form $(m, m)$ is a local maxima if $\sigma \geq 2 m$ (see 
\cref{lemma:feasibility_symmetric}).
We focus on equilibria of the form 
 $(\hat m, \hat m)$
 because it has less restrictions.

Now let us analyze the conditions for the following equilibria: $(m_0, m_0)$, 
$(\hat m, m_0)$, and $(\hat m, \hat m)$, for $\hat m\in(0, m_0)$.

Let us show that $J_i'(m, m) > 0$ for all $m$, which according to \cref{corol:special_eq}, is enough to guarantee that
$(m_0, m_0)$ is the only NE.
Note that
\begin{equation}
 J_i'(m, m) = \frac{1}{(x-\sigma)\log \alpha} - \frac{\sigma^2}{b} f_1(x, x) f_2(x, x),
\end{equation}
with $x=m+\sigma$.
Since $f_1(x, x)\leq \frac{5}{3}$ and $f_2(x, x)\leq \frac{1}{9 \sigma^2}$, then
\begin{align}
 J_i'(m, m) & \geq \frac{1}{m_0 \log \alpha} - \frac{\sigma^2}{b}\frac{5}{3}\frac{1}{9 \sigma^2} \\
 & = \frac{1}{m_0 \log \alpha} - \frac{5}{27 b}
\end{align}
Then $ J_i'(m, m)\geq 0$ if $m_0\leq \frac{27b}{5 \log \alpha}$.
Note that if $m_0> \frac{27b}{5 \log \alpha}$, then an equilibria of the form $(\hat m, \hat m)$ is possible.

Now, using $f_2(\bar x, \bar x) \leq \frac{5}{3}$ we can 
obtain the following
\begin{equation}\label{eq:marginal_m0_lower}
 \frac{ \partial }{ \partial m_i} J_i (m_0, m_0) \geq
 \frac{1}{m_0} \frac{1}{\log \alpha}
- \frac{ \sigma^2 }{ b } 
\frac{1}{ (2 m_0 + 3 \sigma)^2 } \frac{5}{3}
\end{equation}
Let $\acute \sigma$ be such that the upper bound in \cref{eq:marginal_m0_upper} is equal to zero, that is, 
\begin{equation}\label{eq:tilde_sigma_a}
 \frac{1}{m_0} \frac{1}{\log \alpha}
- \frac{ \acute \sigma^2 }{ b } 
\frac{1}{ (2 m_0 + 3 \acute \sigma)^2 } \frac{5}{3} = 0
\end{equation}
From \cref{eq:sol_sigma} we know that the previous expression has a positive solution if $9 - \tilde \gamma < 0 $, 
where $\tilde \gamma = \frac{5 m_0 \log \alpha}{3 b}$.
Concretely, the solution to \cref{eq:tilde_sigma_a} is
\begin{equation}
 \acute \sigma = \frac{2 m_0}{ \tilde \gamma^{1/2} - 3 }
\end{equation}
if 
\begin{equation}
 m_0 \geq \frac{27 b}{5 \log \alpha}.
\end{equation}
In summary, 
\begin{equation}
\frac{ \partial }{ \partial m_i} J_i (m_0, m_0) \geq 0 
\end{equation}
for $\sigma \leq \acute \sigma$, and according to \cref{lemma:properties_eq}, $(m_0, m_0)$ is a feasible NE.
Moreover, since $J_i'$ is decreasing wrt $m_j$, then $(m_0, m_0)$ is the only NE (see \cref{lemma:single_NE_extreme}).


Next, using $1 < f_2(\bar x, \bar x)$ we obtain
\begin{equation}\label{eq:marginal_m0_upper}
 \frac{ \partial }{ \partial m_i} J_i (m_0, m_0) <
 \frac{1}{m_0} \frac{1}{\log \alpha}
- \frac{ \sigma^2 }{ b } 
\frac{1}{ (2 m_0 + 3 \sigma)^2 }
\end{equation}
Let $\breve \sigma$ such that the upper bound in \cref{eq:marginal_m0_upper} is equal to zero
\begin{equation}
 \frac{1}{m_0} \frac{1}{\log \alpha}
- \frac{ \breve \sigma^2 }{ b } 
\frac{1}{ (2 m_0 + 3 \breve \sigma)^2 } = 0
\end{equation}
We can rewrite the above as 
\begin{equation}\label{eq:hat_sigma_a}
 (2 m_0 + 3 \breve \sigma)^2 = \breve \sigma^2 \hat \gamma,
\end{equation}
where $\hat \gamma = \frac{m_0 \log \alpha}{b}$.
From \cref{eq:sol_sigma} we know that the previous expression has a positive solution if $9 - \hat \gamma < 0 $, that is, if
\begin{equation}
 m_0 > \frac{9 b }{ \log \alpha}.
\end{equation}
If the above is satisfied, then the solution is 
\begin{equation}
 \breve \sigma = \frac{2 m_0}{ \hat \gamma^{1/2} - 3 }
\end{equation}
Note that \cref{eq:marginal_m0_upper} is decreasing wrt $\sigma$; hence, 
\begin{equation}
\frac{ \partial }{ \partial m_i} J_i (m_0, m_0) < 0 
\end{equation}
for $\sigma \geq \breve \sigma$, in which case $(\hat m, m_0)$ is a feasible NE.
%
%
%
%
\end{proof}

\modification{
Lastly, the next result defines conditions to have an $\varepsilon$ Nash Equilibrium:
\begin{lemma}\label{lemma:eNE}Consider the games 
$\mathcal{G}_a = \langle \mathcal{P}, (S_i)_{i\in\mathcal{P}}, (u_i)_{i\in\mathcal{P}} \rangle$
and 
$\mathcal{G}_b = \langle \mathcal{P}, (S_i)_{i\in\mathcal{P}}, (\tilde u_i)_{i\in\mathcal{P}} \rangle$.
If $\tilde u_i$ is an approximation of $u_i$ s.t.
\begin{align}\label{eq:bounds}
u_i(s_i, s_{-i}) \geq \tilde u_i(s_i, s_{-i}) - \delta \\
u_i(s_i, s_{-i}) \leq \tilde u_i(s_i, s_{-i}) + \eta
\end{align}
for $\delta, \eta>0$ and any $s_i\in S_i$ for $i\in\mathcal{P}$,
then a NE of $\mathcal{G}_b$ is an $\varepsilon$-NE (or near NE) of $\mathcal{G}_a$.
\end{lemma}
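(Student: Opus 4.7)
The plan is to chain the two bounds in~\eqref{eq:bounds} together with the Nash Equilibrium inequality for $\mathcal{G}_b$. Since an $\varepsilon$-NE of $\mathcal{G}_a$ is a strategy profile $s^* = (s_i^*, s_{-i}^*)$ such that $u_i(s_i^*, s_{-i}^*) \geq u_i(s_i, s_{-i}^*) - \varepsilon$ for every player $i$ and every deviation $s_i \in S_i$, the task reduces to exhibiting such an $\varepsilon$ in terms of $\delta$ and $\eta$.

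First I would let $s^* = (s_i^*, s_{-i}^*)$ be an arbitrary NE of $\mathcal{G}_b$, so that by definition
\begin{equation*}
\tilde u_i(s_i^*, s_{-i}^*) \geq \tilde u_i(s_i, s_{-i}^*) \quad \text{for all } s_i \in S_i \text{ and all } i \in \mathcal{P}.
\end{equation*}
Then I would apply the lower bound from~\eqref{eq:bounds} at $(s_i^*, s_{-i}^*)$ to pass from $u_i$ to $\tilde u_i$, invoke the NE inequality above, and finally apply the upper bound from~\eqref{eq:bounds} at $(s_i, s_{-i}^*)$ to pass from $\tilde u_i$ back to $u_i$. Concretely, this yields the chain
\begin{equation*}
u_i(s_i^*, s_{-i}^*) \geq \tilde u_i(s_i^*, s_{-i}^*) - \delta \geq \tilde u_i(s_i, s_{-i}^*) - \delta \geq u_i(s_i, s_{-i}^*) - \delta - \eta.
\end{equation*}

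Taking $\varepsilon = \delta + \eta$ then finishes the proof: $s^*$ is an $\varepsilon$-NE of $\mathcal{G}_a$. There is no real obstacle here; the argument is a straightforward sandwiching. The only subtlety worth noting explicitly is that the bounds in~\eqref{eq:bounds} are required to hold uniformly in $s_i$ (so they apply both at the equilibrium profile and at an arbitrary deviation), which is why the two uniform constants $\delta$ and $\eta$ together control the approximation loss and why the resulting $\varepsilon$ is simply their sum rather than something depending on the profile.
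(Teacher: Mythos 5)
Your proof is correct and follows essentially the same route as the paper's: start from the Nash inequality in $\mathcal{G}_b$, sandwich it between the two uniform bounds in \eqref{eq:bounds}, and conclude with $\varepsilon = \delta + \eta$. Nothing is missing.
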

\begin{proof}[Proof of \cref{lemma:eNE}]
Let $(\tilde s_i, \tilde s_{-i})$ be a NE for $\mathcal{G}_b$. This means that 
\begin{equation}
\tilde u_i(\tilde s_i, \tilde s_{-i}) \geq \tilde u_i(s_i, \tilde s_{-i}), \quad \forall i\in\mathcal{P}, s_i\in S_i
\end{equation}
From the upper and lower bound conditions in \cref{eq:bounds} we get
\begin{equation}
u_i(\tilde s_i, \tilde s_{-i}) + \delta \geq u_i(s_i, \tilde s_{-i}) -\eta
\end{equation}
which can be rewritten as 
\begin{equation}
u_i(\tilde s_i, \tilde s_{-i}) \geq u_i(s_i, \tilde s_{-i}) -\eta - \delta.
\end{equation}
Then $(\tilde s_i, \tilde s_{-i})$ is an $\varepsilon$-NE, with $\varepsilon=\eta+\delta$. 
\end{proof}
}

\section*{Acknowledgments}

This research was supported by L{\"a}nsf{\"o}rs{\"akringar} (O.~Reinert \& T.~Wiesinger), the Swedish Foundation for Strategic Research, grant no. SM19-0009 (U.~Franke), and Digital Futures (U.~Franke \& C.~Barreto).
\hlbreakable{We would like to thank the anonymous reviewers for their insightful feedback, which helped us to improve the manuscript.}


\DeclareRobustCommand\EIOPAlongname{ European Insurance and Occupational Pensions Authority}

\bibliography{References}

\begin{thebibliography}{53}
\expandafter\ifx\csname natexlab\endcsname\relax\def\natexlab#1{#1}\fi
\providecommand{\url}[1]{\texttt{#1}}
\providecommand{\href}[2]{#2}
\providecommand{\path}[1]{#1}
\providecommand{\DOIprefix}{doi:}
\providecommand{\ArXivprefix}{arXiv:}
\providecommand{\URLprefix}{URL: }
\providecommand{\Pubmedprefix}{pmid:}
\providecommand{\doi}[1]{\href{http://dx.doi.org/#1}{\path{#1}}}
\providecommand{\Pubmed}[1]{\href{pmid:#1}{\path{#1}}}
\providecommand{\bibinfo}[2]{#2}
\ifx\xfnm\relax \def\xfnm[#1]{\unskip,\space#1}\fi
\bibitem[{Anderson and Moore(2006)}]{anderson2006economics}
\bibinfo{author}{Anderson, R.}, \bibinfo{author}{Moore, T.},
  \bibinfo{year}{2006}.
\newblock \bibinfo{title}{The economics of information security}.
\newblock \bibinfo{journal}{Science} \bibinfo{volume}{314},
  \bibinfo{pages}{610--613}.
\newblock \URLprefix \url{http://dx.doi.org/10.1126/science.1130992}.
\bibitem[{Ani et~al.(2017)Ani, He and Tiwari}]{Ani2017cybersecurity}
\bibinfo{author}{Ani, U.P.D.}, \bibinfo{author}{He, H.M.},
  \bibinfo{author}{Tiwari, A.}, \bibinfo{year}{2017}.
\newblock \bibinfo{title}{Review of cybersecurity issues in industrial critical
  infrastructure: manufacturing in perspective}.
\newblock \bibinfo{journal}{Journal of Cyber Security Technology}
  \bibinfo{volume}{1}, \bibinfo{pages}{32--74}.
\newblock \DOIprefix\doi{10.1080/23742917.2016.1252211}.
\bibitem[{Bah{\c{s}}i et~al.(2019)Bah{\c{s}}i, Franke and {Langfeldt
  Friberg}}]{bahsi2019cyberinsurance}
\bibinfo{author}{Bah{\c{s}}i, H.}, \bibinfo{author}{Franke, U.},
  \bibinfo{author}{{Langfeldt Friberg}, E.}, \bibinfo{year}{2019}.
\newblock \bibinfo{title}{{The Cyber-Insurance Market in Norway}}.
\newblock \bibinfo{journal}{Information and Computer Security}
  \bibinfo{volume}{28}, \bibinfo{pages}{54--670}.
\newblock \DOIprefix\doi{10.1108/ICS-01-2019-0012}.
\bibitem[{Barreto et~al.(2021)Barreto, Schwartz and
  Cardenas}]{barreto2021cyber}
\bibinfo{author}{Barreto, C.}, \bibinfo{author}{Schwartz, G.},
  \bibinfo{author}{Cardenas, A.A.}, \bibinfo{year}{2021}.
\newblock \bibinfo{title}{Cyber-insurance}, in: \bibinfo{booktitle}{Safety,
  Security and Privacy for Cyber-Physical Systems}.
  \bibinfo{publisher}{Springer}, pp. \bibinfo{pages}{347--375}.
\bibitem[{Biener et~al.(2015)Biener, Eling and Wirfs}]{biener2015insurability}
\bibinfo{author}{Biener, C.}, \bibinfo{author}{Eling, M.},
  \bibinfo{author}{Wirfs, J.H.}, \bibinfo{year}{2015}.
\newblock \bibinfo{title}{Insurability of cyber risk: An empirical analysis}.
\newblock \bibinfo{journal}{The Geneva Papers on Risk and Insurance-Issues and
  Practice} \bibinfo{volume}{40}, \bibinfo{pages}{131--158}.
\newblock \DOIprefix\doi{10.1057/gpp.2014.19}.
\bibitem[{B{\"o}hme and Kataria(2006)}]{bohme2006models}
\bibinfo{author}{B{\"o}hme, R.}, \bibinfo{author}{Kataria, G.},
  \bibinfo{year}{2006}.
\newblock \bibinfo{title}{Models and measures for correlation in
  cyber-insurance.}, in: \bibinfo{booktitle}{Workshop on Economics of
  Information Security -- WEIS}.
\bibitem[{B{\"o}hme and Schwartz(2010)}]{bohme2010modeling}
\bibinfo{author}{B{\"o}hme, R.}, \bibinfo{author}{Schwartz, G.},
  \bibinfo{year}{2010}.
\newblock \bibinfo{title}{{Modeling Cyber-Insurance: Towards a Unifying
  Framework.}}, in: \bibinfo{booktitle}{Workshop on Economics of Information
  Security -- WEIS}.
\bibitem[{Cheng and Powers(2008)}]{cheng2008can}
\bibinfo{author}{Cheng, J.}, \bibinfo{author}{Powers, M.R.},
  \bibinfo{year}{2008}.
\newblock \bibinfo{title}{Can independent underwriters benefit insurers in
  high-risk lines? a cournot market-game analysis}.
\newblock \bibinfo{journal}{Assurances (Insurance and Risk Management)}
  \bibinfo{volume}{76}, \bibinfo{pages}{5--44}.
\bibitem[{Coventry and Branley(2018)}]{coventry2018cybersecurity}
\bibinfo{author}{Coventry, L.}, \bibinfo{author}{Branley, D.},
  \bibinfo{year}{2018}.
\newblock \bibinfo{title}{Cybersecurity in healthcare: a narrative review of
  trends, threats and ways forward}.
\newblock \bibinfo{journal}{Maturitas} \bibinfo{volume}{113},
  \bibinfo{pages}{48--52}.
\newblock \DOIprefix\doi{10.1016/j.maturitas.2018.04.008}.
\bibitem[{Dupont(2019)}]{dupont2019cyber}
\bibinfo{author}{Dupont, B.}, \bibinfo{year}{2019}.
\newblock \bibinfo{title}{The cyber-resilience of financial institutions:
  significance and applicability}.
\newblock \bibinfo{journal}{Journal of Cybersecurity} \bibinfo{volume}{5},
  \bibinfo{pages}{1--17}.
\newblock \DOIprefix\doi{10.1093/cybsec/tyz013}.
\bibitem[{Dykstra et~al.(2022)Dykstra, Gordon, Loeb and
  Zhou}]{dykstra2022economics}
\bibinfo{author}{Dykstra, J.}, \bibinfo{author}{Gordon, L.A.},
  \bibinfo{author}{Loeb, M.P.}, \bibinfo{author}{Zhou, L.},
  \bibinfo{year}{2022}.
\newblock \bibinfo{title}{The economics of sharing unclassified cyber threat
  intelligence by government agencies and departments}.
\newblock \bibinfo{journal}{Journal of Information Security}
  \bibinfo{volume}{13}, \bibinfo{pages}{85--100}.
\bibitem[{EIOPA\EIOPAlongname(2019)}]{eiopa2019cyber}
\bibinfo{author}{EIOPA\EIOPAlongname}, \bibinfo{year}{2019}.
\newblock \bibinfo{title}{{Cyber risk for insurers -- Challenges and
  opportunities}}.
\newblock \DOIprefix\doi{10.2854/305969}.
\bibitem[{EIOPA\EIOPAlongname(2020)}]{eiopa2020cyber}
\bibinfo{author}{EIOPA\EIOPAlongname}, \bibinfo{year}{2020}.
\newblock \bibinfo{title}{{EIOPA strategy on cyber underwriting}}.
\newblock \DOIprefix\doi{10.2854/793935}.
\bibitem[{Ericsson(2010)}]{ericsson2010cyber}
\bibinfo{author}{Ericsson, G.N.}, \bibinfo{year}{2010}.
\newblock \bibinfo{title}{Cyber security and power system
  communication---essential parts of a smart grid infrastructure}.
\newblock \bibinfo{journal}{IEEE Transactions on Power Delivery}
  \bibinfo{volume}{25}, \bibinfo{pages}{1501--1507}.
\newblock \DOIprefix\doi{10.1109/TPWRD.2010.2046654}.
\bibitem[{Franke(2017)}]{franke2017cyberinsurance}
\bibinfo{author}{Franke, U.}, \bibinfo{year}{2017}.
\newblock \bibinfo{title}{{The cyber insurance market in Sweden}}.
\newblock \bibinfo{journal}{Computers \& Security} \bibinfo{volume}{68},
  \bibinfo{pages}{130--144}.
\newblock \DOIprefix\doi{10.1016/j.cose.2017.04.010}.
\bibitem[{Franke et~al.(2021)Franke, Turell and Johansson}]{Franke2021NIS}
\bibinfo{author}{Franke, U.}, \bibinfo{author}{Turell, J.},
  \bibinfo{author}{Johansson, I.}, \bibinfo{year}{2021}.
\newblock \bibinfo{title}{{The cost of incidents in essential services---data
  from Swedish NIS reporting}}, in: \bibinfo{booktitle}{16th International
  Conference on Critical Information Infrastructures Security (CRITIS 2021)},
  \bibinfo{publisher}{Springer}. pp. \bibinfo{pages}{116--129}.
\newblock \DOIprefix\doi{10.1007/978-3-030-93200-8_7}.
\bibitem[{Fudenberg and Levine(1986)}]{fudenberg1986limit}
\bibinfo{author}{Fudenberg, D.}, \bibinfo{author}{Levine, D.},
  \bibinfo{year}{1986}.
\newblock \bibinfo{title}{Limit games and limit equilibria}.
\newblock \bibinfo{journal}{Journal of Economic Theory} \bibinfo{volume}{38},
  \bibinfo{pages}{261--279}.
\newblock \DOIprefix\doi{10.1016/0022-0531(86)90118-3}.
\bibitem[{Gal-Or(1986)}]{gal1986information}
\bibinfo{author}{Gal-Or, E.}, \bibinfo{year}{1986}.
\newblock \bibinfo{title}{Information transmission---{Cournot} and {Bertrand}
  equilibria}.
\newblock \bibinfo{journal}{The Review of Economic Studies}
  \bibinfo{volume}{53}, \bibinfo{pages}{85--92}.
\newblock \DOIprefix\doi{10.2307/2297593}.
\bibitem[{Gal-Or and Ghose(2005)}]{gal2005economic}
\bibinfo{author}{Gal-Or, E.}, \bibinfo{author}{Ghose, A.},
  \bibinfo{year}{2005}.
\newblock \bibinfo{title}{The economic incentives for sharing security
  information}.
\newblock \bibinfo{journal}{Information Systems Research} \bibinfo{volume}{16},
  \bibinfo{pages}{186--208}.
\bibitem[{Gale et~al.(2002)Gale, Hellwig et~al.}]{gale2002competitive}
\bibinfo{author}{Gale, D.}, \bibinfo{author}{Hellwig, M.}, et~al.,
  \bibinfo{year}{2002}.
\newblock \bibinfo{title}{Competitive Insurance Markets with Asymmetric
  Information: A Cournot-Arrow-Debreu Approach}.
\newblock \bibinfo{type}{Technical Report}. Sonderforschungsbereich 504,
  Universit{\"a}t Mannheim.
\bibitem[{Gao et~al.(2016)Gao, Nozick, Kruse and Davidson}]{gao2016modeling}
\bibinfo{author}{Gao, Y.}, \bibinfo{author}{Nozick, L.},
  \bibinfo{author}{Kruse, J.}, \bibinfo{author}{Davidson, R.},
  \bibinfo{year}{2016}.
\newblock \bibinfo{title}{Modeling competition in a market for natural
  catastrophe insurance}.
\newblock \bibinfo{journal}{Journal of Insurance Issues} ,
  \bibinfo{pages}{38--68}\URLprefix
  \url{https://www.jstor.org/stable/43741101}.
\bibitem[{Gordon et~al.(2003a)Gordon, Loeb and Lucyshyn}]{gordon2003sharing}
\bibinfo{author}{Gordon, L.A.}, \bibinfo{author}{Loeb, M.P.},
  \bibinfo{author}{Lucyshyn, W.}, \bibinfo{year}{2003}a.
\newblock \bibinfo{title}{{Sharing information on computer systems security: An
  economic analysis}}.
\newblock \bibinfo{journal}{Journal of Accounting and Public Policy}
  \bibinfo{volume}{22}, \bibinfo{pages}{461--485}.
\newblock \DOIprefix\doi{10.1016/j.jaccpubpol.2003.09.001}.
\bibitem[{Gordon et~al.(2003b)Gordon, Loeb and Sohail}]{gordon2003framework}
\bibinfo{author}{Gordon, L.A.}, \bibinfo{author}{Loeb, M.P.},
  \bibinfo{author}{Sohail, T.}, \bibinfo{year}{2003}b.
\newblock \bibinfo{title}{A framework for using insurance for cyber-risk
  management}.
\newblock \bibinfo{journal}{Communications of the ACM} \bibinfo{volume}{46},
  \bibinfo{pages}{81--85}.
\newblock \DOIprefix\doi{10.1145/636772.636774}.
\bibitem[{{Insurance Europe}(2020)}]{insuranceeurope2020cyber}
\bibinfo{author}{{Insurance Europe}}, \bibinfo{year}{2020}.
\newblock \bibinfo{title}{{Key messages on EIOPA's cyber underwriting
  strategy}}.
\newblock \URLprefix
  \url{https://www.insuranceeurope.eu/key-messages-published-eiopa-cyber-strategy}.
  \bibinfo{note}{{Published June 15, 2020}}.
\bibitem[{Koepke(2017)}]{koepke2017cybersecurity}
\bibinfo{author}{Koepke, P.}, \bibinfo{year}{2017}.
\newblock \bibinfo{title}{Cybersecurity information sharing incentives and
  barriers}.
\newblock \bibinfo{howpublished}{Sloan School of Management at MIT University.
  \url{https://cams.mit.edu/wp-content/uploads/2017-13.pdf}}.
\bibitem[{Kopp et~al.(2017)Kopp, Kaffenberger and Wilson}]{Kopp2017}
\bibinfo{author}{Kopp, E.}, \bibinfo{author}{Kaffenberger, C.},
  \bibinfo{author}{Wilson, C.}, \bibinfo{year}{2017}.
\newblock \bibinfo{title}{Cyber risk, market failures, and financial
  stability}.
\newblock \bibinfo{journal}{IMF Working Paper} \URLprefix
  \url{https://www.imf.org/en/Publications/WP/Issues/2017/08/07/Cyber-Risk-Market-Failures-and-Financial-Stability-45104}.
\bibitem[{Kruse et~al.(2017)Kruse, Frederick, Jacobson and
  Monticone}]{kruse2017cybersecurity}
\bibinfo{author}{Kruse, C.S.}, \bibinfo{author}{Frederick, B.},
  \bibinfo{author}{Jacobson, T.}, \bibinfo{author}{Monticone, D.K.},
  \bibinfo{year}{2017}.
\newblock \bibinfo{title}{Cybersecurity in healthcare: A systematic review of
  modern threats and trends}.
\newblock \bibinfo{journal}{Technology and Health Care} \bibinfo{volume}{25},
  \bibinfo{pages}{1--10}.
\newblock \DOIprefix\doi{10.3233/THC-161263}.
\bibitem[{Laube and B{\"o}hme(2016)}]{laube2016economics}
\bibinfo{author}{Laube, S.}, \bibinfo{author}{B{\"o}hme, R.},
  \bibinfo{year}{2016}.
\newblock \bibinfo{title}{The economics of mandatory security breach reporting
  to authorities}.
\newblock \bibinfo{journal}{Journal of Cybersecurity} \bibinfo{volume}{2},
  \bibinfo{pages}{29--41}.
\newblock \DOIprefix\doi{10.1093/cybsec/tyw002}.
\bibitem[{Laube and B{\"o}hme(2017)}]{laube2017strategic}
\bibinfo{author}{Laube, S.}, \bibinfo{author}{B{\"o}hme, R.},
  \bibinfo{year}{2017}.
\newblock \bibinfo{title}{Strategic aspects of cyber risk information sharing}.
\newblock \bibinfo{journal}{ACM Computing Surveys (CSUR)} \bibinfo{volume}{50},
  \bibinfo{pages}{1--36}.
\newblock \DOIprefix\doi{10.1145/3124398}.
\bibitem[{Lloyd's(2018)}]{Lloyds2018Cloud}
\bibinfo{author}{Lloyd's}, \bibinfo{year}{2018}.
\newblock \bibinfo{title}{{Cloud Down: Impacts on the US economy}}.
\newblock \bibinfo{type}{Technical Report}. Lloyd's of London.
\newblock \URLprefix
  \url{https://www.lloyds.com/news-and-risk-insight/risk-reports/library/technology/cloud-down}.
  \bibinfo{note}{accessed March 19, 2018.}
\bibitem[{Marotta et~al.(2017)Marotta, Martinelli, Nanni, Orlando and
  Yautsiukhin}]{marotta2017cyber}
\bibinfo{author}{Marotta, A.}, \bibinfo{author}{Martinelli, F.},
  \bibinfo{author}{Nanni, S.}, \bibinfo{author}{Orlando, A.},
  \bibinfo{author}{Yautsiukhin, A.}, \bibinfo{year}{2017}.
\newblock \bibinfo{title}{Cyber-insurance survey}.
\newblock \bibinfo{journal}{Computer Science Review} \bibinfo{volume}{24},
  \bibinfo{pages}{35--61}.
\newblock \DOIprefix\doi{10.1016/j.cosrev.2017.01.001}.
\bibitem[{Mott et~al.(2023)Mott, Turner, Nurse, MacColl, Sullivan, Cartwright
  and Cartwright}]{mott2023between}
\bibinfo{author}{Mott, G.}, \bibinfo{author}{Turner, S.},
  \bibinfo{author}{Nurse, J.R.}, \bibinfo{author}{MacColl, J.},
  \bibinfo{author}{Sullivan, J.}, \bibinfo{author}{Cartwright, A.},
  \bibinfo{author}{Cartwright, E.}, \bibinfo{year}{2023}.
\newblock \bibinfo{title}{Between a rock and a hard (ening) place: Cyber
  insurance in the ransomware era}.
\newblock \bibinfo{journal}{Computers \& Security} ,
  \bibinfo{pages}{103162}\DOIprefix\doi{10.1016/j.cose.2023.103162}.
\bibitem[{Myerson(1978)}]{myerson1978refinements}
\bibinfo{author}{Myerson, R.B.}, \bibinfo{year}{1978}.
\newblock \bibinfo{title}{Refinements of the {Nash} equilibrium concept}.
\newblock \bibinfo{journal}{International Journal of Game Theory}
  \bibinfo{volume}{7}, \bibinfo{pages}{73--80}.
\newblock \DOIprefix\doi{10.1007/BF01753236}.
\bibitem[{Naghizadeh and Liu(2016)}]{naghizadeh2016inter}
\bibinfo{author}{Naghizadeh, P.}, \bibinfo{author}{Liu, M.},
  \bibinfo{year}{2016}.
\newblock \bibinfo{title}{Inter-temporal incentives in security information
  sharing agreements}, in: \bibinfo{booktitle}{2016 Information Theory and
  Applications Workshop (ITA)}, \bibinfo{organization}{IEEE}. pp.
  \bibinfo{pages}{1--8}.
\newblock \DOIprefix\doi{10.1109/ITA.2016.7888179}.
\bibitem[{Nurse et~al.(2020)Nurse, Axon, Erola, Agrafiotis, Goldsmith and
  Creese}]{nurse2020data}
\bibinfo{author}{Nurse, J.}, \bibinfo{author}{Axon, L.},
  \bibinfo{author}{Erola, A.}, \bibinfo{author}{Agrafiotis, I.},
  \bibinfo{author}{Goldsmith, M.}, \bibinfo{author}{Creese, S.},
  \bibinfo{year}{2020}.
\newblock \bibinfo{title}{The data that drives cyber insurance: A study into
  the underwriting and claims processes}, in: \bibinfo{booktitle}{2020
  International Conference on Cyber Situational Awareness, Data Analytics and
  Assessment (CyberSA)}, pp. \bibinfo{pages}{1--8}.
\newblock \DOIprefix\doi{10.1109/CyberSA49311.2020.9139703}.
\bibitem[{OECD(2017)}]{oecd2017cyber}
\bibinfo{author}{OECD}, \bibinfo{year}{2017}.
\newblock \bibinfo{title}{{Enhancing the Role of Insurance in Cyber Risk
  Management}}.
\newblock \DOIprefix\doi{10.1787/9789264282148-en}.
\bibitem[{OECD(2020)}]{oecd2020cyberdata}
\bibinfo{author}{OECD}, \bibinfo{year}{2020}.
\newblock \bibinfo{title}{{Enhancing the Availability of Data for Cyber
  Insurance Underwriting}}.
\newblock \URLprefix
  \url{https://www.oecd.org/daf/fin/insurance/Enhancing-the-Availability-of-Data-for-Cyber-Insurance-Underwriting.pdf}.
\bibitem[{Radner(1962)}]{radner1962team}
\bibinfo{author}{Radner, R.}, \bibinfo{year}{1962}.
\newblock \bibinfo{title}{Team decision problems}.
\newblock \bibinfo{journal}{The Annals of Mathematical Statistics}
  \bibinfo{volume}{33}, \bibinfo{pages}{857--881}.
\bibitem[{Raith(1996)}]{raith1996general}
\bibinfo{author}{Raith, M.}, \bibinfo{year}{1996}.
\newblock \bibinfo{title}{A general model of information sharing in oligopoly}.
\newblock \bibinfo{journal}{Journal of economic theory} \bibinfo{volume}{71},
  \bibinfo{pages}{260--288}.
\newblock \DOIprefix\doi{10.1006/jeth.1996.0117}.
\bibitem[{Skopik et~al.(2016)Skopik, Settanni and Fiedler}]{skopik2016problem}
\bibinfo{author}{Skopik, F.}, \bibinfo{author}{Settanni, G.},
  \bibinfo{author}{Fiedler, R.}, \bibinfo{year}{2016}.
\newblock \bibinfo{title}{A problem shared is a problem halved: A survey on the
  dimensions of collective cyber defense through security information sharing}.
\newblock \bibinfo{journal}{Computers \& Security} \bibinfo{volume}{60},
  \bibinfo{pages}{154--176}.
\newblock \DOIprefix\doi{10.1016/j.cose.2016.04.003}.
\bibitem[{Sridhar et~al.(2011)Sridhar, Hahn and Govindarasu}]{sridhar2011cyber}
\bibinfo{author}{Sridhar, S.}, \bibinfo{author}{Hahn, A.},
  \bibinfo{author}{Govindarasu, M.}, \bibinfo{year}{2011}.
\newblock \bibinfo{title}{Cyber--physical system security for the electric
  power grid}.
\newblock \bibinfo{journal}{Proceedings of the IEEE} \bibinfo{volume}{100},
  \bibinfo{pages}{210--224}.
\newblock \DOIprefix\doi{10.1109/JPROC.2011.2165269}.
\bibitem[{Srinivasa et~al.(2022)Srinivasa, Pedersen and
  Vasilomanolakis}]{srinivasa2022deceptive}
\bibinfo{author}{Srinivasa, S.}, \bibinfo{author}{Pedersen, J.M.},
  \bibinfo{author}{Vasilomanolakis, E.}, \bibinfo{year}{2022}.
\newblock \bibinfo{title}{Deceptive directories and “vulnerable” logs: a
  honeypot study of the ldap and log4j attack landscape}, in:
  \bibinfo{booktitle}{2022 IEEE European Symposium on Security and Privacy
  Workshops (EuroS\&PW)}, \bibinfo{organization}{IEEE}. pp.
  \bibinfo{pages}{442--447}.
\newblock \DOIprefix\doi{10.1109/EuroSPW55150.2022.00052}.
\bibitem[{Tosh et~al.(2015)Tosh, Sengupta, Kamhoua, Kwiat and
  Martin}]{tosh2015evolutionary}
\bibinfo{author}{Tosh, D.}, \bibinfo{author}{Sengupta, S.},
  \bibinfo{author}{Kamhoua, C.}, \bibinfo{author}{Kwiat, K.},
  \bibinfo{author}{Martin, A.}, \bibinfo{year}{2015}.
\newblock \bibinfo{title}{An evolutionary game-theoretic framework for
  cyber-threat information sharing}, in: \bibinfo{booktitle}{2015 IEEE
  International Conference on Communications (ICC)},
  \bibinfo{organization}{IEEE}. pp. \bibinfo{pages}{7341--7346}.
\newblock \DOIprefix\doi{10.1109/ICC.2015.7249499}.
\bibitem[{Tosh et~al.(2017)Tosh, Shetty, Sengupta, Kesan and
  Kamhoua}]{tosh2017risk}
\bibinfo{author}{Tosh, D.K.}, \bibinfo{author}{Shetty, S.},
  \bibinfo{author}{Sengupta, S.}, \bibinfo{author}{Kesan, J.P.},
  \bibinfo{author}{Kamhoua, C.A.}, \bibinfo{year}{2017}.
\newblock \bibinfo{title}{Risk management using cyber-threat information
  sharing and cyber-insurance}, in: \bibinfo{booktitle}{International
  conference on game theory for networks}, \bibinfo{organization}{Springer}.
  pp. \bibinfo{pages}{154--164}.
\newblock \DOIprefix\doi{10.1007/978-3-319-67540-4_14}.
\bibitem[{Varga et~al.(2021)Varga, Brynielsson and Franke}]{varga2021cyber}
\bibinfo{author}{Varga, S.}, \bibinfo{author}{Brynielsson, J.},
  \bibinfo{author}{Franke, U.}, \bibinfo{year}{2021}.
\newblock \bibinfo{title}{Cyber-threat perception and risk management in the
  {Swedish} financial sector}.
\newblock \bibinfo{journal}{Computers \& Security} \bibinfo{volume}{105}.
\newblock \DOIprefix\doi{10.1016/j.cose.2021.102239}.
\bibitem[{Varian(1992)}]{varian1992microeconomic}
\bibinfo{author}{Varian, H.R.}, \bibinfo{year}{1992}.
\newblock \bibinfo{title}{Microeconomic analysis}. volume~\bibinfo{volume}{3}.
\newblock \bibinfo{publisher}{Norton New York}.
\bibitem[{Wang et~al.(2003)Wang, Tzeng and Wang}]{wang2003nightmare}
\bibinfo{author}{Wang, J.L.}, \bibinfo{author}{Tzeng, L.Y.},
  \bibinfo{author}{Wang, E.L.}, \bibinfo{year}{2003}.
\newblock \bibinfo{title}{The nightmare of the leader: the impact of
  deregulation on an oligopoly insurance market}.
\newblock \bibinfo{journal}{Journal of Insurance Issues} ,
  \bibinfo{pages}{15--28}.
\bibitem[{Wells et~al.(2014)Wells, Camelio, Williams and
  White}]{wells2014cyber}
\bibinfo{author}{Wells, L.J.}, \bibinfo{author}{Camelio, J.A.},
  \bibinfo{author}{Williams, C.B.}, \bibinfo{author}{White, J.},
  \bibinfo{year}{2014}.
\newblock \bibinfo{title}{Cyber-physical security challenges in manufacturing
  systems}.
\newblock \bibinfo{journal}{Manufacturing Letters} \bibinfo{volume}{2},
  \bibinfo{pages}{74--77}.
\newblock \DOIprefix\doi{10.1016/j.mfglet.2014.01.005}.
\bibitem[{Woods et~al.(2017)Woods, Agrafiotis, Nurse and
  Creese}]{woods2017mapping}
\bibinfo{author}{Woods, D.}, \bibinfo{author}{Agrafiotis, I.},
  \bibinfo{author}{Nurse, J.R.}, \bibinfo{author}{Creese, S.},
  \bibinfo{year}{2017}.
\newblock \bibinfo{title}{Mapping the coverage of security controls in cyber
  insurance proposal forms}.
\newblock \bibinfo{journal}{Journal of Internet Services and Applications}
  \bibinfo{volume}{8}, \bibinfo{pages}{1--13}.
\newblock \DOIprefix\doi{10.1186/s13174-017-0059-y}.
\bibitem[{Woods and Böhme(2021)}]{woods2021systematization}
\bibinfo{author}{Woods, D.W.}, \bibinfo{author}{Böhme, R.},
  \bibinfo{year}{2021}.
\newblock \bibinfo{title}{{SoK: Quantifying Cyber Risk}}, in:
  \bibinfo{booktitle}{2021 IEEE Symposium on Security and Privacy (SP)}, pp.
  \bibinfo{pages}{211--228}.
\newblock \DOIprefix\doi{10.1109/SP40001.2021.00053}.
\bibitem[{Woods and Moore(2019)}]{woods2019does}
\bibinfo{author}{Woods, D.W.}, \bibinfo{author}{Moore, T.},
  \bibinfo{year}{2019}.
\newblock \bibinfo{title}{Does insurance have a future in governing
  cybersecurity?}
\newblock \bibinfo{journal}{IEEE Security \& Privacy} \bibinfo{volume}{18},
  \bibinfo{pages}{21--27}.
\newblock \DOIprefix\doi{10.1109/MSEC.2019.2935702}.
\bibitem[{Woods and Walter(2022)}]{woods2022reviewing}
\bibinfo{author}{Woods, D.W.}, \bibinfo{author}{Walter, L.},
  \bibinfo{year}{2022}.
\newblock \bibinfo{title}{Reviewing estimates of cybercrime victimisation and
  cyber risk likelihood}, in: \bibinfo{booktitle}{2022 IEEE European Symposium
  on Security and Privacy Workshops (EuroS\&PW)}, \bibinfo{organization}{IEEE}.
  pp. \bibinfo{pages}{150--162}.
\newblock \DOIprefix\doi{10.1109/EuroSPW55150.2022.00021}.
\bibitem[{Zhang et~al.(2014)Zhang, Choffnes, Levin, Dumitra{\c{s}}, Mislove,
  Schulman and Wilson}]{zhang2014analysis}
\bibinfo{author}{Zhang, L.}, \bibinfo{author}{Choffnes, D.},
  \bibinfo{author}{Levin, D.}, \bibinfo{author}{Dumitra{\c{s}}, T.},
  \bibinfo{author}{Mislove, A.}, \bibinfo{author}{Schulman, A.},
  \bibinfo{author}{Wilson, C.}, \bibinfo{year}{2014}.
\newblock \bibinfo{title}{Analysis of ssl certificate reissues and revocations
  in the wake of heartbleed}, in: \bibinfo{booktitle}{Proceedings of the 2014
  Conference on Internet Measurement Conference}, pp.
  \bibinfo{pages}{489--502}.
\newblock \DOIprefix\doi{10.1145/2663716.2663758}.

\end{thebibliography}

\end{document}